\documentclass[journal]{IEEEtran}
\usepackage{amsmath}
\usepackage{amsfonts}
\usepackage{amssymb}
\usepackage{bbm}
\usepackage{graphicx}
\usepackage{textcomp}
\usepackage{multirow}
\usepackage[linesnumbered,ruled]{algorithm2e}
\usepackage[switch]{lineno}
\usepackage{amsthm}
\usepackage{amsopn}
\usepackage{url}
\usepackage{epstopdf}
\usepackage{color}
\usepackage{bm}
\usepackage{verbatim}
\usepackage{enumerate}
\usepackage{comment}
\usepackage [short,nocomma]{optidef}
\usepackage{mathtools}
\usepackage{booktabs}
\usepackage[caption=false]{subfig}
\usepackage{array}
\usepackage{stfloats}
\def\BibTeX{{\rm B\kern-.05em{\sc i\kern-.025em b}\kern-.08em
		T\kern-.1667em\lower.7ex\hbox{E}\kern-.125emX}}
\usepackage{booktabs}
\usepackage{pifont}

\newtheorem{prop}{Proposition}
\SetKwInput{KwInput}{Input}
\SetKwInput{KwOutput}{Output}

\usepackage{threeparttable}
\usepackage{array}
\usepackage{multirow}
\usepackage{longtable}
\usepackage{rotating}
\newcommand{\tabincell}[2]{\begin{tabular}{@{}#1@{}}#2\end{tabular}}
\ifCLASSINFOpdf
\else
\fi
\begin{document}
\title{Orchestrating Data Collection and Computation in Green IoT Networks}
\author{Junfei Zhan, Tengjiao He, Kwan-Wu Chin, Benyu Chen, Fei Song
\thanks{
        Author Zhan is with the Jinan University-University of Birmingham Joint Institute, Jinan University, Guangzhou, China, and also with the Department of Electrical and Systems Engineering, University of Pennsylvania, PA, USA (email: zjf2024@seas.upenn.edu).
        Author He and Chen are with the College of Information Science and Technology, Jinan University, Guangzhou, China (email: htj2018@jnu.edu.cn, 202334071005@stu.jnu.edu.cn.
        Author Chin is with the School of Electrical, Computer and Telecommunications Engineering, University of Wollongong, NSW, Australia (email: kwanwu@uow.edu.au).
        Author Song is with the School of Electronic and Information Engineering, Beijing Jiaotong University, Beijing, China  (email: fsong@bjtu.edu.cn).
}
}

\maketitle
\newtheorem{proposition}{Proposition}

\begin{abstract}
Future Internet of things (IoT) networks will host applications that involve data collection and computation tasks on one or more servers.  
To this end, this paper proposes the first mixed integer linear program (MILP) to schedule and embed applications on energy harvesting nodes, where it optimizes (i) the sampling time of devices, (ii) whether to run an application, and (iii) the energy usage of devices, gateways and servers.  
To ensure applications are run often, we adopt the maximum age of service (AoS) metric, and set the MILP's objective to minimize the maximum AoS or min-max AoS of applications.
This paper also proposes two novel solutions: 
(i) a receding horizon control (RHC) based method, and 
(ii) a solution that greedily embeds applications according to their AoS.  
The results show that the min-max AoS of RHC and greedy approach is respectively 1.07x and 1.13x higher than MILP. 
\end{abstract}

\begin{IEEEkeywords}
Renewable energy, Resource allocation, Model Predictive Control, Virtualization, Information Freshness.
\end{IEEEkeywords}
\IEEEpeerreviewmaketitle

\section{Introduction}
In the near future, Internet of things (IoT) networks will have to optimize computing, communication, and storage resources to run artificial intelligence (AI) driven applications~\cite{8658105}. 
In addition, they will employ virtualization, whereby devices and servers run containers or virtual machines so that different users can share an IoT network \cite{10634801}.
Apart from that, nodes are likely to use renewable energy sources, where devices and servers may be powered by solar~\cite{he2026malware}.  
This will likely become standard given the massive energy requirements of future networks, especially those that run AI models \cite{AIEnergy23}.
Given the said so called {\em green} IoT network, a key issue is scheduling and embedding applications with dependent tasks on nodes; e.g., such an application may perform video analytics where it first acquires videos from devices with a camera followed by tasks such as object detection and recognition~\cite{10973078}.
Figure~\ref{toyFig} shows an example IoT network, where we aim to run two such applications.  Their corresponding tasks, which are run as virtual network functions (VNFs), are deployed at gateways and at servers.  
VNF-Cs run on sensor devices. VNF-Ps run on gateways or servers to process collected data, after which results are sent to the sink.   

\begin{figure}[tphb]
    \centering
    \includegraphics[width=0.47\textwidth]{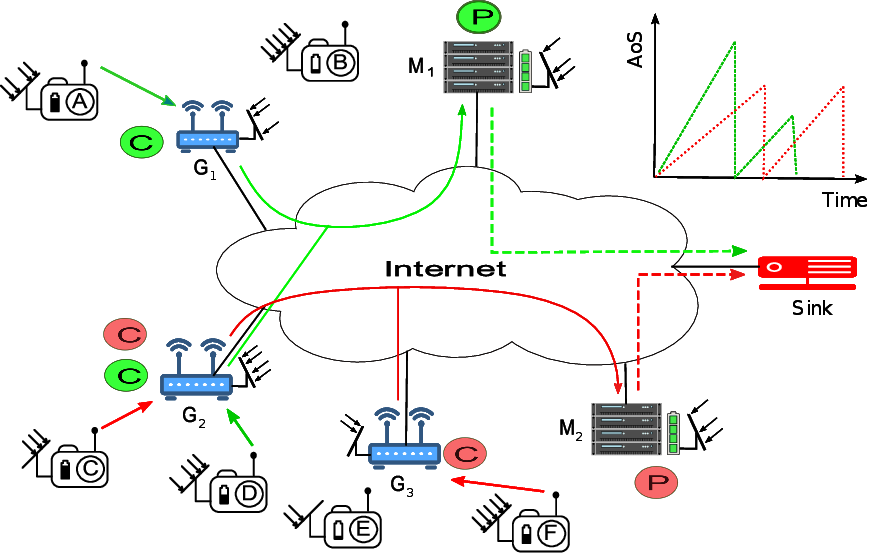}
    \caption{An example with server $M_1$ and $M_2$. The gateways ($G_1$ to $G_3$), devices, and servers rely on solar energy. The VNFs (denoted as a circle) of two DAGs have a different color, i.e., green and red.  These VNFs include (i) VNF-Cs, labeled as C in a circle, which are deployed at a gateway to download data from a device, and (ii) VNF-Ps, labelled as P in a circle, which process data from VNF-Cs.  Lastly, the traffic between VNFs are indicated by their corresponding arrow, and dotted arrows indicate result from a VNF-P.}
    \label{toyFig}
\end{figure}
When running applications, a key issue is ensuring they are run frequently so that they are able to acquire and process the latest data.  
Hence, we adopt age of service (AoS)~\cite{microserviceliu2022} as a performance metric, where the AoS of an application is equal to the time elapsed since the data generation time of the most recent service of the application to its completion time; i.e., an application with a low AoS equates to it having an up to date result.
Compared to the age of information (AoI)~\cite{FirstAoI}, AoS emphasizes the freshness of services, which include time taken to acquire and process data.
For example, a video analytics application may have one or more VNF-Cs to collect images from different cameras, which are then processed by a VNF-P that runs an object detection algorithm.   The VNF-P then sends its computed result to a sink node.  Here, the application's AoS corresponds to when a target is last (or not) detected in an area.

There are a number of challenges when scheduling and embedding applications in the aforementioned network.
{\em First}, solar energy arrivals exhibit spatio-temporal properties, meaning the energy level of nodes varies over time and space.
This has an impact on their ability to run a given VNF, or equivalently, the number of VNFs they can run in each time slot. 
Consequently, an energy outage may result in a VNF-C/VNF-P not being run in a time slot to collect or compute data, which leads to a higher AoS.  Moreover, a VNF may be run on a server that is far from the sink, which also increases its AoS.
The {\em second} issue is fair resource allocation, where a network must not starve an application in favor of other applications, meaning the AoS of an application must not grow significantly larger than the AoS of other applications. 
%
%

%

Henceforth, this paper makes the following contributions:
\begin{itemize}
    \item[C1] It is the first to study embedding of applications in a solar-powered network; as we will highlight in Section~\ref{RWORKS}, our problem is open where past works have {\em not} jointly considered AoS, energy harvesting servers, and applications with dependent tasks.  
    \item[C2] It outlines a novel mixed integer linear program (MILP) that can be used to optimally embed the tasks of applications over a planning horizon, where it aims to minimize the maximum AoS or min-max AoS of these applications.  
    The MILP is the first to jointly optimize (i) device selection and sampling time, (ii) embedding of dependent tasks, and (iii) energy usage of devices, gateways and servers.
    \item[C3] It outlines the first heuristic solution called GreedyOL, which embeds applications according to their AoS.  
    In addition, it presents a novel receding horizon control (RHC) approach called RHCOP that uses causal information to embed applications.     
    \item[C4] It reports on the first study of a novel system, problem and solutions.  
    It presents a number of theoretical properties, including the said problem's hardness, and computation complexity of MILP and GreedyOL.
    Then it presents numerical results that showed that the min-max AoS of RHCOP and GreedyOL is respectively 1.07x and 1.13x higher than MILP.  
    Additional gateways, servers and a larger solar panel size help reduce min-max AoS.  By contrast, more applications led to larger min-max AoS values.
    Further, when applications have equal number of VNF-Cs and VNF-Ps, they the lowest min-max AoS as compared to the case when they have more VNF-Cs versus VNF-Ps, and vice-versa.  
    %
    %
\end{itemize}
%

Next, we discuss prior works.  Section \ref{SMODEL} presents our network model, and Section \ref{PROBLEMF} defines the problem formally.  Section \ref{SOLS} outlines our RHC-based solution.  Section \ref{Heur} presents the heuristic solution.    
Section \ref{ANA} lists some properties of the formulated MILP and proposed solutions.
Our evaluation and results are discussed in Section \ref{EVAL}.  
Section \ref{CONC} concludes the paper.
\section{Related Works}\label{RWORKS} 
Our research overlaps with works that aim to (i) embed service function chains (SFCs) or applications modeled as directed acyclic graphs (DAGs) in a network,  (ii) minimize some freshness related metric, 
and (iii) age of network service.
\subsection{SFC/DAGs Embedding} 
Past works have considered the optimal mapping of virtual networks onto a network substrate.  For example, the work in \cite{VNE_Energy_2021_4aa} aims to minimize the energy consumption of wireless sensor networks (WSNs), whereas the work in \cite{xie2022joint} considers multi-access edge computing (MEC) 
to name a few.  However, the nodes in these works are not powered by a renewable energy source. 
%
By contrast, the authors of \cite{My_CL, Honglin_IoT} consider the problem of embedding virtual networks onto a radio-frequency (RF) powered Internet of things (IoT) network.
Both works aim to optimize the charging power used by a hybrid access point, VNFs placement, routing and/or link schedule.  
The work in \cite{cheng2020joint} considers sharing of nodes between different operators.  These nodes are also powered by RF and support virtualization.  The authors formulated a Stackelberg game that jointly controls node sharing, the charging strategy of an RF transmitter and caching in order to maximize profit.
Liang et al. \cite{liang2024sustainable} proposed a solution to minimize the energy cost of their network by optimizing VNF placement of SFCs and traffic routing.  The key challenges include time-varying wireless links and uncertain energy arrivals. 
However, the aforementioned works do not aim to minimize AoI or AoS, where they aim to maximize the number of embedded SFCs/VNFs.  They also consider a network with a dedicated power source as opposed to a renewable energy source.
%

%
\subsection{Information Freshness}\label{RWORKSB}
To date, many works have proposed metrics that aim to quantify the freshness of collected data. Examples include~\cite{he2022age}, and \cite{xiao2023aoi}.
We emphasize that our work is different to AoI works, which do not consider computation and DAGs.
To this end, we only consider works that aim to ensure the freshness of {\em computed} results.   In this respect, a number of works have considered devices that collect and process data that is required by users.  In this respect, timeliness of processed data is critical, where data is processed by a device or offloaded to a server.

Reference \cite{10155465} jointly considers the problem of executing sensed data locally at devices or offloaded to a cloud server in order to minimize AoI.   In the latter case, a key problem is scheduling data transmissions.
Ling et al. \cite{DynAgeAoIChen} consider devices that have energy capability.  These devices consider their available energy when processing data locally or offloading data to a server in order to minimize AoI.  
The work in \cite{9896777} considers devices that decide whether to process data locally or offloaded to a server to drive a co-located actuator.  
In \cite{shi2024enhancing}, the authors consider tasks associated with autonomous driving, where their aim is to ensure these tasks have the latest sensing data for vehicle control. They design a reinforcement learning-based solution to optimize sensor data selection and task placement, aiming to minimize the maximum AoI of tasks.  
Li et al. \cite{AoP222} proposed the age of processing (AoP) metric, which they then study in a system with a single server and server.  They consider sampling frequency as well as task offloading in order to minimize AoP.
In \cite{chen2021information}, ground users contend for a wireless channel and computing resources at an edge server and a unmanned aerial vehicle (UAV).    Their aim is to achieve a trade-off between information freshness and system energy cost. 
\subsection{Age of Network Service}\label{RWORKSC}
Some works have studied AoI and network virtualization.   
For example, reference~\cite{SFC_AoI} aims to minimize the energy cost of embedding SFCs whilst ensuring the AoI of each SFC remains below a threshold at all times.    
In \cite{JSACAoI}, each user has a SFC that involves data collection.  The authors of \cite{JSACAoI} aim to minimize the weighted sum of SFC embedding cost and the maximum AoI of users, where the AoI of a user is determined by the VNFs schedule of its SFC.
%
References \cite{peng2023aoi, 10449431} study partial offloading problem with task dependencies.  Its aim is to minimize the weighted sum of AoI, task processing latency and energy consumption.  Both references, however, only consider a linear directed graph; i.e., a sequence of tasks.
%

Reference \cite{microserviceliu2022} uses AoS to measure the freshness of service in industrial applications.  It embeds SFCs from different devices, and aims to minimize the average AoS of these SFC requests.  
%
%
In \cite{TimelyWakisaka2021}, the execution of serverless functions at servers relies on  the freshness of data from devices related to the function.  
It jointly optimizes data update at devices and data reception of servers to minimize the weighted sum of AoS of serverless functions.
%

\begin{table*}[htbp!]
    \centering
\caption{A comparison between works.}
\label{LR_Comp}
    \begin{threeparttable}
    %
    %
    \begin{tabular}{|c|c|c|c|c|c|c|c|} \hline 
         \textbf{Ref}&\textbf{\tabincell{c}{Only VNF \\ Placement}}&\textbf{\tabincell{c}{SFC \\ Embedding}}& \textbf{\tabincell{c}{DAG \\ Embedding}}&  \textbf{\tabincell{c}{Computation Resource\\Allocation}}&  \textbf{\tabincell{c}{Energy \\Harvesting}}&  \textbf{\tabincell{c}{Information \\Freshness}}& \textbf{\tabincell{c}{Data\\Collection}}\\ \hline 
         \cite{microserviceliu2022}&&\checkmark &  &\checkmark  &  &\checkmark  &\\ \hline
         \cite{VNE_Energy_2021_4aa}&\checkmark&&  &  &  &  & \\ \hline
         \cite{xie2022joint}&\checkmark&&  &\checkmark  &  &  & \\ \hline
         \cite{My_CL}& & &\checkmark  &\checkmark  &\checkmark  &  & \\ \hline
         \cite{Honglin_IoT}\cite{liang2024sustainable}& &\checkmark &  &  &\checkmark  &  &   \\ \hline
         \cite{cheng2020joint}&    & &\checkmark   &  &\checkmark   &  & \\ \hline
         %
         %
         %
         \cite{he2022age}&& &  &\checkmark  &\checkmark  &\checkmark  & \\ \hline
         \cite{xiao2023aoi}&   & &  &  &  &\checkmark  &\checkmark \\ \hline
         %
         %
         %
         %
         %
         \cite{DynAgeAoIChen}&& &  &\checkmark  &\checkmark  &\checkmark  & \\ \hline
         %
         %
         \cite{10155465}\cite{9896777}\cite{shi2024enhancing}& & &  &\checkmark  &  &\checkmark  &\checkmark \\ \hline
         \cite{AoP222}\cite{chen2021information}&& &  &\checkmark  &  &\checkmark  & \\ \hline
         %
         \cite{SFC_AoI}\cite{JSACAoI}&&\checkmark &  &\checkmark  &  &\checkmark  & \\ \hline
         \cite{peng2023aoi}&& &\checkmark  &\checkmark  &  &\checkmark  & \\ \hline
         \cite{TimelyWakisaka2021}&& &\checkmark  &\checkmark  &  &\checkmark  &\checkmark\\ \hline
         {\bf Our Work} &\checkmark   &\checkmark &\checkmark  &\checkmark  &\checkmark  &\checkmark  &\checkmark \\ \hline
    \end{tabular}
    
    \end{threeparttable}
\end{table*}

\subsection{Discussion}
Table \ref{LR_Comp} compares prior works, where they have considered embedding VNFs/SFCs in energy harvesting networks, namely \cite{My_CL}\cite{Honglin_IoT}\cite{cheng2020joint}\cite{liang2024sustainable}, do not have the same system setup nor address the same problem.  More specifically, these works assume there are dedicated power sources that can be controlled to deliver energy to nodes.  In contrast, the nodes in our network are ambient sources that exhibit spatio-temporal properties.  Further, these works do not consider scheduling of DAGs to minimize AoS.  In addition, they do not consider data collection from energy harvesting nodes.
The works in Section~\ref{RWORKSB} do not consider energy harvesting servers nor embedding of DAGs that include tasks that require one or more energy harvesting devices to collect data.
Lastly, the works in Section~\ref{RWORKSC} do not consider energy harvesting nodes.

In summary, although prior works have considered (computed) information freshness and embedding of SFCs/VNFs, they have not {\em jointly} considered the following aspects:
\begin{itemize}
\item End-devices and servers with energy harvesting capabilities, which govern the amount of data collected and number of applications run over time.   
\item Orchestrating applications with dependent tasks, where tasks may involve both data collection and computation.   This generalizes prior works that only consider a single sensor device or/and application.  
\item Unlike prior works, the AoS of applications is a function of the time varying energy and computation resources at nodes.  In particular, an application can only run when all its VNFs have sufficient resources to run, and the amount of resources consumed by an application are impacted by other applications.
\end{itemize}
%

%
%
\section{System Model}\label{SMODEL}
\begin{table}[htbp]
\small
\centering
\caption{\label{notationTabel}A summary of notations.}
\begin{tabular}{lp{7cm}}
\toprule
{\bf 1.} & {\bf Sets} \\
\midrule
%
$\mathcal{T}$ & Set of time slots. \\
$\mathcal{G}$ & Graph.  \\
$\mathcal{V}$ & Set of nodes.  \\
$s$			  & The sink.\\
$\mathcal{E}$ & Set of links.  \\
$\mathcal{V}_{D}$ & Set of devices.  \\
$\mathcal{V}_{G}$ & Set of gateways.  \\
$\mathcal{V}_{S}$ & Set of servers.  \\
$\mathcal{N}_{i}$ & Set of devices associated to gateway $i$.  \\
$\mathcal{E}_{W}$ & Set of wireless links.  \\
$\mathcal{E}_{G}$ & Set of wired links between gateways and servers.  \\
$\mathcal{E}_{S}$ & Set of wired links between servers and the sink.  \\
$\mathcal{R}$ & Set of DAG requests.  \\
$\mathcal{G}^{r}$ & DAG of request $r$.  \\
$\mathcal{V}_{G}^{r}$ & Set of VNF-Cs in DAG $r$.  \\
$\mathcal{V}_{S}^{r}$ & Set of VNF-Ps in DAG $r$.  \\
$\mathcal{E}^{r}$ & Set of direct links in DAG $r$.  \\
\bottomrule
{\bf 2.} & {\bf Constants} \\
\toprule
$\delta$ & Duration of a time slot. \\
$E_{n}^{max}$ & Battery capacity of node $n$. \\
$C_{n}^{max}$ & Processing capacity of node $n$. \\
$g_{di}^{t}$ & Channel power gain between device $d$ and gateway $i$. \\
$B$  & Bandwidth. \\
$\bar{R}$  & Capacity of a wired link. \\
$W_{n}^{t}$  & The amount of solar energy arrival. \\
$\Gamma_{ru}$  & Data rate requirement of VNF $u$ of DAG $r$. \\
$C_{u}^{r}$  & Processing requirement of VNF $u$ of DAG $r$.\\
$B_{uv}^{r}$  & Bandwidth requirement of link $(u,v)$ of DAG $r$.\\
$\Phi_{rui}$ & A binary variable to indicate the location of VNF-C.\\
$P_{i}^{0}$  & Base power of node $i$.\\
$P_{i}^{1}$  & Peak power of node $j$.\\
$\rho$  & Energy consumption of sensing one bit.\\
$\Psi$ & A large positive value to disable constraints.\\
\bottomrule
{\bf 3.} & {\bf Variables} \\
\toprule
$w_{n}^{t}$ & The amount of energy stored at node $n$ in slot $t$.  \\
$e_{n}^{t}$ & Energy level at node $n$ in slot $t$.  \\
$x_{rui}^{t}$ & A binary variable to indicate the activation of VNF-C. \\ 
$U_{i}^{t}$ & CPU utilization of gateway $i$ is active in slot $t$. \\
$\phi_{di}^{t}$ & A binary variable to indicate the activation of device.\\
$P_{di}^{t}$  & Transmit power of device $d$ at time slot $t$. \\
$y_{rvj}^{t}$ & A binary variable to indicate the activation of VNF-P.\\
$l_{ruv}^{tis}$ & A binary variable shows when edge $(u,v)$ of DAG $r$ uses the link $(i,s)$ in $t$.\\ 
$z_{ros}^{t}$ & A binary variable to indicate the activation of VNF-M.\\
$a_{r}^{t}$ & AoS value of DAG request $r$ in slot $t$.  \\
$\lambda_{r}^{t}$ & A non-negative variable.\\
\bottomrule
\end{tabular}
\end{table}

We optimize over discrete time slots in set $\mathcal{T}$; each time slot has index $t$ and duration $\delta$.  
For convenience, we will assume $\delta$ is one second, meaning the term energy and power are used interchangeably.
The network substrate is represented as a graph $\mathcal{G}(\mathcal{V}\cup \{o\},\mathcal{E})$, where $\mathcal{V}$ and $\mathcal{E}$ denote the set of nodes and links.  The sink has label $o$.  
Each node $n \in \mathcal{V}$ has battery capacity $E_{n}^{max}$.  
There are three types of substrate nodes: (i) devices, (ii) gateways, and (iii) servers.  We will use $d$ for device, $i$ for gateway and $s$ for server. 
Let $\mathcal{V}_{D}$, $\mathcal{V}_{G}$ and $\mathcal{V}_{S}$ denote the set of devices, gateways and servers, respectively.  
Each device $d$ is only associated to one gateway in $\mathcal{V}_{G}$.    Define $\mathcal{N}_{i}$ as the set of devices connected to gateway $g$. 
In addition, gateways assign an orthogonal frequency to their associated devices.
Note that frequency/channel assignment is beyond the scope of this paper.
The respective processing capacity of each gateway $i \in \mathcal{V}_{G}$ and server $s \in \mathcal{V}_{S}$ is $C_{i}^{max}$ and $C_{s}^{max}$. 
%

\subsection{Interconnections}
The wireless link between devices and their respective gateway is recorded in set $\mathcal{E}_{W}$.
We model these links as block fading Rician channels, where the gain of a channel is fixed during a slot but varies between two adjacent slots.  Let $g_{di}^{t}$ be the channel power gain between device $d$ and gateway $i$ in slot $t$, which is given by 
\begin{eqnarray}\label{Channel_Model}
    g_{di}^{t} = h C_0 \left(\frac{D_{di}}{D_0}\right)^{-\alpha}, &\forall (d,i)\in \mathcal{E}_{W},
\end{eqnarray}
where $h$ is drawn from an Exponential distribution with unity mean, $C_{0}$ indicates the path loss at a reference distance $D_{0}$, $\alpha$ is the path loss exponent, and $D_{di}$ is the Euclidean distance between device $d$ and gateway $i$.
There are two types of wired channels: (i) gateways-servers, and (ii) servers-sink $o$.
These channels are recorded in set $\mathcal{E}_{G}$ and $\mathcal{E}_{S}$, respectively.
Without loss of generality, we assume these channels have capacity $\bar{R}$ (bps).  
Note that in practice such links can be established by a software defined network (SDN) controller, where it installs a flow entry in switches along a path between two nodes~\cite{SDNTE}.
%

\subsection{Energy Harvesting}
Let $W_{n}^{t}$ be the amount of solar energy that arrives at node $n \in \mathcal{V}$ in time slot $t$.  
This amount is governed by the Markov model proposed in~\cite{Data_Driven_Solar}.
Briefly, the model has four states, each corresponding to a weather condition, namely ``Excellent'', ``Good'', ``Fair'' and ``Poor'', which are denoted as $\omega_E$, $\omega_G$, $\omega_F$ and $\omega_P$, respectively.  For each state, there is a Gaussian distribution with a given mean $\mu_{\omega}$ and variance $\sigma^2_{\omega}$, where $\omega\in\{\omega_E, \omega_G, \omega_F, \omega_P\}$. 
Further, there is a transition probability between states, denoted as
$\mathbb{P}_{ab}$, where $a,b\in\{\omega_E, \omega_G, \omega_F, \omega_P\}$.
A key advantage of the Markov model is that the mean and variance of the Gaussian distribution of each state, and the transition probability between states can be set as per solar irradiance measurements from a test-bed.
Each node $n$ can only store up to $E^{max}_n$ amount of energy.  Let $w_{n}^{t}$ denote the amount of energy stored by node $n$ in slot $t$.  
This amount is constrained by (i) the amount of solar energy arrivals, and (ii) the available battery capacity of node $n$.  Let $e_{n}^{t}$ record the energy level of node $n$ at the end of slot $t$.  Formally, the value of $w_{n}^{t}$ is constrained by
\begin{eqnarray}
	w_{n}^{t} \leq W_{n}^{t}, & \forall n \in \mathcal{V}, \forall t \in \mathcal{T},\label{EH_Bound_1}\\
	w_{n}^{t} \leq E_{n}^{max} -e_{n}^{t}, & \forall n \in \mathcal{V}, \forall t \in \mathcal{T}. \label{EH_Bound_2}
\end{eqnarray}
%

\subsection{Applications}
We model applications as DAGs.
Define $\mathcal{R}$ as the set of requests to embed DAGs.  Each DAG request $r$ has three types of of tasks that are run as the following VNFs: VNF-C, VNF-P and VNF-M.
VNF-C $u$ of DAG $r$ is located at a given gateway.  Its responsibilities include (i) collecting data from a device at a rate of $\Gamma_{ru}$ (bit/s), (ii) aggregating these raw data into a sample, and (iii) sending the sample to a VNF-P. 
Each VNF-P, which runs at a server, combines and processes samples from one or more VNF-Cs before sending the result to the sink.  
The sink runs VNF-M, meaning each DAG request $r$ is rooted at sink $o$ to merge all data.  

%
We model each DAG as a graph $\mathcal{G}^{r}(\mathcal{V}_{G}^{r}\cup\mathcal{V}_{S}^{r}\cup\{\text{VNF-M}\},\mathcal{E}^{r})$, where $\mathcal{V}_{G}^{r}$, $\mathcal{V}_{S}^{r}$, and $\mathcal{E}^{r}$ represent the set of VNF-C, VNF-P and directed edges, respectively.
We will use $u$, $v$ and $m$ to refer to VNF-C, VNF-P and VNF-M, respectively.  
%
%
Let $C_{u}^{r}$ denote the processing requirement, i.e., CPU cycles, of VNF $u$ of DAG $r$, where $u\in\mathcal{V}_{G}^{r}\cup\mathcal{V}_{S}^{r}$.  
Each directed edge $(u,v) \in \mathcal{E}^r$ requires bandwidth $B_{uv}^{r}$.

Next, we outline constraints relating to the placement or activation of VNFs at gateways, servers and devices.
%

\subsection{Gateways}
The VNF-Cs of DAG requests are pre-loaded at gateways.  
Define an indicator $\Phi_{rui} \in \{0,1\}$, where we have $\Phi_{rui}=1$ if VNF-C $u$ of DAG $r$ is located at gateway $i$; otherwise, $\Phi_{rui}=0$.
Let $x_{rui}^{t}\in\{0,1\}$ denote whether VNF-C $u$ of DAG $r$ located at gateway $i$ is active in slot $t$; we have $x_{rui}^{t}=1$ ($x_{rui}^{t}=0$) when VNF-C is active (inactive) in slot $t$.
For each gateway $i$, its processing capacity is constrained as
\begin{eqnarray}\label{CPU_Capa}
	\sum_{r\in\mathcal{R}} \sum_{u\in \mathcal{V}_{G}^{r}} \Phi_{rui} x_{rui}^{t} C_{u}^{r} \le  C_{i}^{max}, &\forall i \in \mathcal{V}_{G},\forall t \in \mathcal{T}. 
\end{eqnarray}
Each VNF-C runs on one gateway.  We thus have
\begin{eqnarray}\label{CPU_Capb}
	\sum_{i\in\mathcal{V}_{G}} \Phi_{rui}x_{rui}^{t} \le  1, &\forall u \in \mathcal{V}_{G}^{r}, \forall r\in\mathcal{R}, \forall t \in \mathcal{T}. 
\end{eqnarray}
Next, we consider the energy expenditure at gateway $i$, which is dependent on its utilization.  It has base power $P_{i}^{0}$ and peak power $P_{i}^{1}$.  
The utilization at gateway $i$ is
\begin{eqnarray}\label{GWUTIL}
    U_{i}^{t} =  \frac{\sum_{r\in\mathcal{R}} \sum_{u\in \mathcal{V}_{G}^{r}} \Phi_{rui}^{t} x_{rui}^{t} C_{u}^{r}}{C_{i}^{max}}, \forall i\in\mathcal{V}_{G}, \forall t\in\mathcal{T}.
\end{eqnarray}
The energy level of gateway $i$ is constrained as
\begin{eqnarray}\label{GWEVOLa}
	0\le e_{i}^{t} \le E_{i}^{max}, &\forall i\in\mathcal{V}_{G}, \forall t\in\mathcal{T}.
\end{eqnarray}
The energy at gateway $i$ evolves as
\begin{eqnarray}\label{GWEVOLb}
	e_{i}^{t} = e_{i}^{t-1} + w_{i}^{t-1} - (P_{i}^{1} - P_{i}^{0})U_{i}^{t}, &\forall i\in\mathcal{V}_{G}, \forall t\in\mathcal{T}.
\end{eqnarray}
\subsection{Devices}
Device $d$ must meet two conditions before uploading any data.  First, its associated gateway $i$ is active in slot $t$.  Second, gateway $i$ selects device $d$ in slot $t$.  
Let $\phi_{di}^{t}\in\{0,1\}$ denote whether device $d$ uploads data to its associated gateway $i$ in slot $t$, where in the case of upload, we have $\phi_{di}^{t}=1$.  Otherwise, we have $\phi_{di}^{t}=0$.  Formally, we have 
\begin{eqnarray}
	\sum_{d\in\mathcal{N}_{i}}\phi_{di}^{t} \ge \frac{\sum_{\forall r\in\mathcal{R}} \sum_{\forall u \in\mathcal{V}_{G}^{r}}x_{rui}^{t}}{\sum_{r\in\mathcal{R}}|\mathcal{V}_{G}^{r}|}, &\forall i\in\mathcal{V}_{G}, \forall t\in\mathcal{T},\label{Device_Up1}\\
	\sum_{d\in\mathcal{N}_{i}}\phi_{di}^{t} \le 1, &\forall i\in\mathcal{V}_{G}, \forall t\in\mathcal{T},\label{Device_Up2}
\end{eqnarray}
where Eq.~\eqref{Device_Up1} ensures $\phi_{di}^{t}$ is non-zero when a VNF-C on gateway $i$ is active in slot $t$. 
Further, Eq.~\eqref{Device_Up2} ensures gateway $i$ selects at most one device in each time slot.
A device must have sufficient energy to support a VNF-C at its associated gateway.   In particular, it must have sufficient energy to sense and transmit at a given rate $\Gamma_{ru}$ to each VNF-C $u$ of DAG $r$.
To this end, let $\rho$ denote the  energy consumption of sensing one bit.  
Define $P_{di}^{t}$ to be the transmission power required to achieve a rate of $\Gamma_{ru}$, which is calculated as
\begin{eqnarray}
	P_{di}^{t} = \frac{(2^{\Gamma_{ru}/B} - 1)N_0}{g_{di}^{t}}. \label{DEVTXPOWa}
\end{eqnarray}

The total energy cost of device $d$, which includes transmission and sensing, in time slot $t$ is $\phi_{di}^{t}(P_{di}^{t} + \rho\Gamma)$.  
For each $t\in\mathcal{T}$, the energy at each device $d \in \mathcal{V}_{D}$ associated to each gateway $i\in\mathcal{V}_G$ evolves as
\begin{eqnarray}\label{DEVOL}
	e_{d}^{t} = e_{d}^{t-1} + w_{d}^{t-1} - \phi_{di}^{t}(P_{di}^{t} + \rho\Gamma_{ru}). \label{DEVTXPOWb}
\end{eqnarray}
%

\subsection{Servers}
The servers in $\mathcal{V}_{S}$ run VNF-Ps.  
Let the binary variable $y_{rvs}^{t}\in\{0,1\}$ denote whether VNF-P $v$ of DAG $r$ is mapped to server $s$ in slot $t$.  
As the computation resource of server $s$ is finite, we have
\begin{eqnarray}\label{ServerCPU}
	\sum_{r\in\mathcal{R}} \sum_{v\in \mathcal{V}_{S}^{r}} y_{rvs}^{t} C_{v}^{r} \le  C_{s}^{max}, &\forall s \in \mathcal{V}_{S},\forall t \in T. 
\end{eqnarray}

Each VNF-P of a DAG runs on one server:
\begin{eqnarray}\label{OneServer}
	\sum_{s\in\mathcal{V}_{S}} y_{rvs}^{t} \le 1, &\forall v\in \mathcal{V}_{S}^{r}, \forall r\in\mathcal{R}, \forall t\in\mathcal{T}.
\end{eqnarray}
%

Next, we consider the energy consumed by server $s$.
Let its base power be $P_{s}^{0}$, and peak power when fully utilized is $P_{s}^{1}$.
Define the processing utilization of server $s$ as
\begin{eqnarray}\label{SUTIL}
	U_{s}^{t} =  \frac{\sum_{r\in\mathcal{R}} \sum_{V\in \mathcal{V}_{S}^{r}} y_{rvs}^{t} C_{v}^{r}}{C_{s}^{max}}, &\forall s\in\mathcal{V}_{S}, \forall t\in\mathcal{T}.
\end{eqnarray}
The energy consumed $e_{s}^{t}$ by server $s$ is by
\begin{eqnarray}\label{GWEVOL3}
	0\le e_{s}^{t} \le E_{s}^{max}, &\forall s\in\mathcal{V}_{S}, \forall t\in\mathcal{T}.
\end{eqnarray}
The energy evolution of server $s$ during each slot $t\in\mathcal{T}$ is 
\begin{eqnarray}\label{GWEVOL4}
	e_{s}^{t} = e_{s}^{t-1} + w_{s}^{t-1} - (P_{s}^{1} - P_{s}^{0})U_{s}^{t} \ge 0, &\forall s\in\mathcal{V}_{S}.
\end{eqnarray}
\subsection{Sink}
We assume sink $o$ has sufficient energy and computing resource.   Sink $o$ runs VNF-M of DAG $r$ only when all VNFs in DAG $r$ are active. 
Let $z_{rmo}^{t} \in \{0,1\}$ denote whether sink $o$ runs VNF-M of DAG $r$ ($z_{rmo}^{t}=1$) or not ($z_{rmo}^{t}=0$). 
To this end, we have 
\begin{eqnarray}
	z_{rmo}^{t} \le \sum_{i\in \mathcal{V}_{G}}x_{rui}^{t}, &\forall r \in \mathcal{R}, \forall u\in \mathcal{V}_{G}^{r}, \forall t \in \mathcal{T},\label{Sink_1}\\
	z_{rmo}^{t} \le \sum_{s\in \mathcal{V}_{S}}y_{rvs}^{t}, &\forall r \in \mathcal{R}, \forall v\in \mathcal{V}_{S}^{r}, \forall t \in \mathcal{T}.\label{Sink_2}
\end{eqnarray}

\subsection{Routing}
The total traffic routed between a gateway and a server, and between a server and sink $o$ must not exceed $\bar{R}$.
Define $l_{ruv}^{tis}\in\{0,1\}$; it is set to $l_{ruv}^{tis}=1$ when edge $(u,v)$ of DAG $r$ uses the link between gateway $i$ and server $s$ in slot $t$.  Otherwise, it is zero.
Further, we have $l_{ruv}^{tis}=1$ when (i) VNF-C of DAG $r$ is active at gateway $i$ in slot $t$, i.e., $x_{rui}^{t}=1$, (ii) VNF-P of DAG $r$ is mapped to server $s$ in slot $t$, i.e., $y_{rvs}^{t}=1$.  
To this end, for each DAG request $r \in \mathcal{R}$ in each slot $t\in\mathcal{T}$, we have
\begin{eqnarray}\label{Routing_1a}
	x_{rui}^{t} = \sum_{s\in\mathcal{V}_{S}} l_{ruv}^{tis}, &\forall (u,v)\in \mathcal{E}^{r}, \forall i\in\mathcal{V}_{G}, \label{Link_1}\\
	\sum_{i\in\mathcal{V}_{G}} l_{ruv}^{tis} = y_{rvs}^{t},  &\forall (u,v)\in \mathcal{E}^{r}, \forall s\in\mathcal{V}_{S}. \label{Link_2}
\end{eqnarray}
In addition, VNF-C $u$ and VNF-P $v$ of DAG $r$ run only when the path between gateway $i$ and server $s$ has sufficient bandwidth. Formally, we have 
\begin{eqnarray}
	\sum_{r\in\mathcal{R}}\sum_{u\in \mathcal{V}_{G}^{r}}\sum_{v\in \mathcal{V}_{S}^{r}}  l_{ruv}^{tis}B_{uv}^{t} \leq \bar{R}, &\forall (i,s) \in \mathcal{E}_{G}, \forall t\in\mathcal{T}. \label{Link_3}
\end{eqnarray}
Lastly, recall each server has only one link to sink $s$, the value of $y_{rvs}^{t}$ indicates whether edge $(v,m)$ is active on link $(s,o)$ in slot $t$.   Thus, for each slot $t\in\mathcal{T}$, we have 
\begin{eqnarray}\label{Routing_1b}
	\sum_{r\in\mathcal{R}}\sum_{v \in \mathcal{V}_{S}^{r}} y_{rvo}^{t}B_{vm}^{t} \leq \bar{R}, &\forall s \in \mathcal{V}_{S}, \forall t\in\mathcal{T}, \label{Link_4}
\end{eqnarray}
where edge $(v,o)$ is active when the channel between server $s$ and sink $o$ has sufficient bandwidth resource. 
%

\subsection{Age of Service (AoS)}
We assume the {\em generate-at-will} and {\em just-in-time} policy~\cite{Elmagid_AoI}, where each device generates a sample for a given DAG if all its VNFs are allocated resources to run in a given time slot. This means the VNF-Ps of a DAG will receive all required samples from VNF-Cs and generate a result to sink $o$.   Note that for each DAG $r$, a sample is generated by each of its VNF-C only after sink $o$ receives the result from its VNF-P(s).
We emphasize that AoS under {\em generate-at-will} is different with transmission delay.  This is because AoS accounts for the entire time since the generation time of the most recent processing update, whereas transmission delay only refers to the transmission time of data from a source to a destination.

As an aside, we note that metrics such as delay and AoI are not suitable for our problem.  
%
%
The reason is because it neglects (i) computation process, and (ii) multiple data sources.
To incorporate both factors, we define the AoS of an application as being equal to the time elapsed since the data generation time of the most recent service of the application to its completion time.  It accounts for the entire time elapsed since the data generation of the most recent service completion. It specifically emphasizes the freshness of a service completion, which includes data collection, multi-stage VNF computation, and delivery of the final result.  
We denote $a_{r}^{t}$ as the AoS value of DAG request $r$.  
It is set to one when DAG $r$ is active in slot $t$, i.e., $z_{rmo}^{t}=1$.  Otherwise, the AoS of DAG $r$ increases by one.  
The value of AoS evolves as per
\begin{eqnarray}
	a_{r}^{t} = (1-z_{rmo}^{t})a_{r}^{t-1}+1, &\forall r \in \mathcal{R}, \forall t \in \mathcal{T}. \label{AoI}
\end{eqnarray}
Thus, we have $a_{r}^{t}=1$ when VNF-M of DAG $r$ is active in slot $t$, meaning sink $s$ received a sample from DAG $r$ in slot $t$.  Otherwise, $a_{r}^{t}$ increases by one.  
%

\section{Problem Formulation}\label{PROBLEMF}
We aim to optimize the min-max AoS of $|\mathcal{R}|$ DAG requests over $|\mathcal{T}|$ slots.
Before presenting our MILP, we have to first linearize Eq.~\eqref{AoI}, i.e., $z_{rmo}^{t}a_{r}^{t-1}$.   
To do so, we replace $z_{rmo}^{t}a_{r}^{t-1}$ in Eq.~\eqref{AoI} with a new variable $\lambda_{r}^{t}$.  Let $\Psi$\footnote{This is also called the big-M method.} be a large positive integer value.  
We note that the AoS of request $r$ increases by one in each time slot if it is not served.  Therefore, the maximum value of $a_{r}^{t}$ is $|\mathcal{T}|$ over a finite planning horizon of $|\mathcal{T}|$ slots.  Hence, it is appropriate to set $\Psi = |\mathcal{T}|$.
We also introduce the following constraints to set the value of $\lambda_{r}^{t}$:
\begin{eqnarray}  
	0 \leq  \lambda_{r}^{t} \leq  z_{rmo}^{t} \Psi, &\forall r \in \mathcal{R}, \forall t \in \mathcal{T},\label{Node_1}\\  
	\lambda_{r}^{t}\geq a_{r}^{t-1} - (1 - z_{rmo}^{t}) \Psi,  &\forall r \in \mathcal{R}, \forall t \in \mathcal{T}, \label{Node_2}  \\
	\lambda_{r}^{t}\leq a_{r}^{t-1} +(1 - z_{rmo}^{t}) \Psi,  &\forall r \in \mathcal{R}, \forall t \in \mathcal{T}, \label{Node_3}\\
	a_{r}^{t} = a_{r}^{t-1} - \lambda_{r}^{t} + 1, &\forall r \in \mathcal{R}, \forall t \in \mathcal{T}. \label{Node_4} 
\end{eqnarray}
Observe that Eq.~\eqref{Node_1}-\eqref{Node_4} present two scenarios.
The first scenario is when DAG $r$ is inactive ($z_{rmo}^{t}=0$), which forces $\lambda_{r}^{t}$ to zero, see Eq.~\eqref{Node_1}.  Further, Eq.~\eqref{Node_2} and Eq.~\eqref{Node_3} are disabled.  In addition, the value of $a_{r}^{t}$ increases by one when we have $z_{rmo}^{t}=0$, see Eq.~\eqref{Node_4}.
The second scenario is when DAG $r$ is active ($z_{rmo}^{t}=1$).  We use Eq.~\eqref{Node_2} and Eq.~\eqref{Node_3} to force $\lambda_{r}^{t}$ to $a_{r}^{t-1}$.  Consequently, as per Eq.~\eqref{Node_4}, the value of
$a_{s}^{t}$ becomes one.
Finally, we have our MILP:
\begin{align}
\label{milp}
    \min_{\phi_{di}^{t},x_{rui}^{t},y_{rvj}^{t},l_{ruv}^{tis},z_{rmo}^{t}} & \quad \text{max} \left\{\sum_{t\in\mathcal{T}}\frac{a_{r}^{t}}{|\mathcal{T}|}\right\}_{r\in\mathcal{R}}, \tag{$\mathbf{P1}$} \\
    \text{s.t.} \quad & \eqref{EH_Bound_1}\mbox{-}\eqref{CPU_Capb}, \eqref{GWEVOLa}\mbox{-}\eqref{Device_Up2}, \eqref{DEVOL}\mbox{-}\eqref{OneServer},\nonumber\\ &\eqref{GWEVOL3}\mbox{-}\eqref{Link_4},\eqref{Node_1}\mbox{-}\eqref{Node_4}.\nonumber
\end{align}

We conclude with the following remarks.
First, $\delta$ represents one unit of time, which is defined by an operator; note, our formulation remains the same for any unit of time.  However, care has to be taken in terms of sizing resources such as energy and computation to reflect a chosen new time duration.
Second, the proposed MILP can be solved by a commercial solver for small problem instances.   As shown in Proposition \ref{Pro1}, the proposed MILP becomes intractable in large scale networks.  
Further, VNF Placement and Traffic Routing (VPTR) problem \cite{NFV21}, an NP-hard problem, is a special case of our problem, see Proposition~\ref{Pro0}.  
%
Lastly, non-causal information, i.e., future wireless channel coefficients and solar energy arrival, is required to solve \ref{milp}, meaning it is not practical and serves only as a theoretical benchmark.  
In the following sections, we present solutions that can be run by a central controller, where the first solution uses historical information, and the second solution only uses current resource information at devices, gateways and servers.  
Next, we present a practical solution that only requires historical information.
%

\section{A Receding Horizon Solution}\label{SOLS}
We design an RHC based solution called Receding Horizon Solution Optimization (RHCOP).
Briefly, RHC is a well-known feedback control strategy.  It builds a time horizon/window, and employs a prediction model to estimate future information within the said window.  
It then optimizes over the time slots in the window and adopts the computed solution for the current time slot.  It then shifts the said window by one time slot, and repeats the process.  

In our case, RHCOP constructs a time window with $K$ time slots.  
It employs a Gaussian mixture model (GMM) to estimate the probability distribution of wireless channel gains and solar energy arrivals at each node over $K$ time slots.
The reason of selecting GMM is that the solar energy arrivals at devices can be characterized as a Markov model with four states.   Each state has a Gaussian distribution with its own mean and variance.  Consequently, energy arrivals are governed by a mixture of Gaussian distribution, i.e., a GMM.

Using the said GMM, RHCOP determines the energy arrivals over the said $K$ time slots, and formulates MILP \eqref{milp} that optimizes over these $K$ time slots.  It then solves the MILP, and applies the computed decision for slot $t$, i.e., $\phi_{di}^{t}$, $x_{rui}^{t}$, $y_{rvj}^{t}$, and $z_{ros}^{t}$.  
Lastly, it shifts the window by one time slot to cover slot $t+1, \ldots, K+1$, and repeats the aforementioned steps to obtain the decision for time $t+1$. 
Figure \ref{RHCExample} gives an overview of RHCOP.
\begin{figure}[htbp]
	\centering 
	\includegraphics[scale = 0.5]{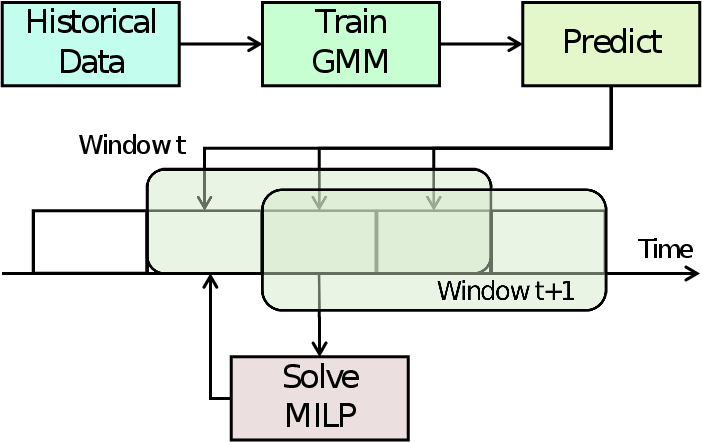}
	\caption{An overview of RHCOP in a network with a single device and server. }
	\label{RHCExample}
\end{figure}
RHCOP has three phases: {\em Training}, {\em Prediction}, and {\em Decision}.  
In the {\em Training} phase, it collects historical data of (i) channel gains for links in $\mathcal{E}_{W}$, (ii) solar arrivals of nodes in $\mathcal{V}$.  
For both (i) and (ii), RHCOP trains the GMM of each node using historical data; note, there is a GMM for each wireless link in $\mathcal{E}_{W}$ and node in $|\mathcal{V}|$.  
Briefly, each GMM includes a finite number of weighted Gaussian distributions/components with a corresponding mean and variance,  where the mean and variance of each distribution is computed by the Expectation-Maximization (EM) algorithm \cite{EM}.
We use $N_{G}$ to record the number of Gaussian components of GMM. 
Note, instead of GMM, other methods, such as a recurrent neural networks, can be used without changing our solution.
The {\em Training} phase can also be scheduled periodically to incorporate new data that improves the GMM of nodes.
%

%
Each slot $t$ contains two phases: (i) {\em Prediction}, and (ii) {\em Decision}.
In the {\em Prediction} phase, RHCOP constructs a time horizon window with size $K$.  
It acquires channel gain at the beginning of slot $t$ via standard pilot-based estimation techniques.
It then calls the GMM of each device, gateway or server to obtain an energy arrival estimate and channel gain estimate of time slots in the said window.    
In the {\em Decision} phase, RHCOP solves MILP \ref{milp} over the given time window, and applies the solution in slot $t$.  
After that, in time slot $t+1$, RHCOP shifts its time window by one slot, and repeats the previous steps.
Algorithm \ref{RHCOP} shows the details of RHCOP. 
%
Each server in $\mathcal{V}_{S}$ and gateways in $\mathcal{V}_{G}$ collect a data set of their historical energy arrivals.
Let the corresponding data be stored in set $\mathbf{E}_{i}$ and $\mathbf{G}_{d}$, see lines 2-7. 
Let $Gg_{d}$ be the GMM of the wireless channel of device $d\in\mathcal{V}_{D}$.  Denote $GW_{n}$ as the GMM that models the energy arrivals at each node $n\in\mathcal{V}$.  RHCOP calls {\em Train(.)} to train GMM $Gg_{d}$ and $GW_{n}$, see lines 8-13.
After that, RHCOP runs the {\em Prediction} phase and the {\em Decision} phase in each slot, see lines 14-26.  
To do so, RHCOP builds a time window, denoted as $\mathcal{K}$, that has $K$ slots, see line 16.
Let $\{\bar{W}_{n}^{t}\}_{t\in \mathcal{K}}$ record the estimated energy arrivals at node $n$ over the said time window; similarly, denote $\{\bar{g}_{di}^{t}\}_{t\in \mathcal{K}}$ as the estimated channel gains of wireless link $(d,i)$.
In lines 17-22, RHCOP calls {\em PredictEH(.)} and {\em PredictCha(.)} to estimate the value of $\{\bar{W}_{n}^{t}\}_{t\in \mathcal{K}}$ and $\{\bar{g}_{di}^{t}\}_{t\in \mathcal{K}}$, respectively.
RHCOP runs the {\em Decision} phase, which solves MILP \ref{milp} over the fixed time window using {\em SolveMILP(.)}; it uses $\Phi$ to record the results of MILP \ref{milp}, see line 24.  
Lastly, RHCOP runs {\em UpdateParameters(.)} to update the value of $\phi_{di}^{t}$, $x_{rui}^{t}$, $y_{rvj}^{t}$, $z_{ros}^{t}$, and $e_{i}^{t}$, see line 25. 
\begin{algorithm}
	\KwInput{$\{\hat{g}_{di}^{t}\}_{\forall (d,i)\in \mathcal{E}_{W}}, \{\hat{W}_{n}^{t}\}_{\forall n \in \mathcal{V}}$} 
	\KwOutput{$\{a_{r}^{t}\}_{r\in\mathcal{R}}$}
    \For{$n\in\mathcal{V}$}
    {
        $\mathbf{E}_{n}$ = CollectData(n)
    }
    \For{$(d,i)\in\mathcal{E}_{W}$}
    {
        $\mathbf{G}_{di}$ = CollectData(d)
    }
    \For{$n\in\mathcal{V}$}
    {	
        $GW_{n}$ = Train($\mathbf{G}_{di}$)\\
    }
    \For{$(d,i)\in\mathcal{E}_{W}$}
    {
        $Gg_{d}$ = Train($\mathbf{E}_{n}$)\\
    }
    \For{$t \in \mathcal{T}$}
	{
        %
		$\mathcal{K}$ = $\{t,t+1\dots,t+K\}$\\
        \For{$n\in\mathcal{V}$}
        {
            $\{\bar{W}_{n}^{t}\}_{t\in \mathcal{K}}$ = PredictEH(n, $\mathcal{K}$, $GW_{n}$)\\
        }
        \For{$(d,i)\in\mathcal{E}_{W}$}
        {
            $\{\bar{g}_{di}^{t}\}_{t\in \mathcal{K}}$ = PredictCha(d, $\mathcal{K}$, $Gg_{d}$)
        }
		%
		$\Phi$ = SolveMILP($\mathcal{K}$, $\{\bar{W}_{n}^{t}\}_{t\in \mathcal{K}}$, $\{\bar{g}_{di}^{t}\}_{t\in \mathcal{K}}$)\\
        UpdateParameters($\Phi$, $t$)\\
	}
	\caption{Pseudocode of RHCOP.}
	\label{RHCOP}
\end{algorithm}
\section{A Heuristic Solution: GreedyOL}\label{Heur}
Greedy Online (GreedyOL) embeds and schedules DAG requests according to their AoS.  It has three key phases in each time slot: (i) {\em Sort}, (ii) {\em Embed}, and (iii) {\em Update}.
The {\em Sort} phase sorts all DAG requests in decreasing order of their AoS value.  The {\em Embed} phase aims to find gateways, servers and physical links that have sufficient computation, energy and communication resources to support DAGs.  The {\em Update} phase determines whether to activate a DAG using the result in the {\em Embed} phase.  
Next, we explain each phase in detail using Algorithm \ref{Greedy}, which is run in each time slot.

In the {\em Sort} phase, GreedyOL calls {\em SortDAoS(.)} to place DAG requests in set $\mathcal{R}$ in descending order of their AoS, i.e., $a_{r}^{t}$, where GreedyOL records the results in set $\bar{\mathcal{R}}$, see line 1.  
The {\em Embed} phase checks DAGs in $\bar{\mathcal{R}}$ using three steps.
First, it finds a gateway $i$ in set $\mathcal{V}_{G}$ for each VNF-C $u$ of DAG $r$, where gateway $i$ must satisfy three conditions: (i) gateway $i$ has sufficient energy to support VNF-C $u$, (ii) gateway $i$ has computation resource to run VNF-C $u$, and (iii) there is at least one device that has sufficient energy of collect data for gateway $i$.  
These conditions are checked by {\em GatewayECD(.)}, and it stores the suitable gateway in $\eta_{u}^{r}$.   After finding a gateway for VNF-C $u$ of DAG $r$, it moves to the next VNF-C of DAG $r$, see line 5-6.   
The second step of the {\em Embed} phase is to find a server $s$ for each VNF-P $v$ in each DAG request $r$.  Server $s$ must satisfy two conditions: (i) it has sufficient energy to embed VNF-P $v$, and (ii) it has sufficient CPU cycles for VNF-P $v$.  
GreedyOL calls ServerEC(.) to check the said conditions, and the corresponding server $s$ is stored in $\psi_{v}^{r}$.  After that, it moves to the next VNF-P, see line 12.   
The last step of the {\em Embed} phase relates to communication resource.   Denote $\gamma_{uv}^{r}$ as a binary variable associated with each edge of DAG $r$.  GreedyOL calls {\em CheckBand(.)} to check whether the link $\eta_{u}^{r}$ and $\psi_{v}^{r}$ have sufficient bandwidth to support the demand of edge (u,v) ($\gamma_{uv}^{r}=1$) or not ($\gamma_{uv}^{r}=0$).  

In the {\em Update} phase, GreedyOL uses $\gamma_{uv}^{r}$ and CheckDAG(.) to check whether all VNF-Cs and VNF-Ps of DAG $r$ are embedded successfully.  If yes, GreedyOL calls UpdateRe(.) to update the energy, computation and communication resource of gateways and servers, see line 20.  After that, GreedyOL embeds DAG $r$ and set its AoS to one.  Otherwise, the value of $a_{r}^{t}$ increases by one. 
\begin{algorithm}
	\KwInput{$\{g_{di}^{t}\}_{\forall (d,i)\in \mathcal{E}_{W}}, \{W_{n}^{t}\}_{\forall n \in \mathcal{V}}$} 
	\KwOutput{$\{a_{r}^{t}\}_{r\in\mathcal{R}}$}
    %
        $\bar{\mathcal{R}}$ = SortDAoS($\{a_{r}^{t}\}_{\forall r \in \mathcal{R}}$)\\
        \For{$r\in \bar{\mathcal{R}}$}
        {
            \For{$u \in \mathcal{V}_{G}^{r}$}
            {
                \For{$i \in \mathcal{V}_{G}$}
                {
                    $\eta_{u}^{r}$ = GatewayECD(i)\\
                    Break
                }
            }
            \For{$v\in\mathcal{V}_{S}^{r}$}
            {
                \For{$s \in \mathcal{V}_{S}$}
                {
                     $\psi_{v}^{r}$=ServerEC(s)\\
                     Break
                }
            }
            \For{$(u,v)\in\mathcal{E}^{r}$}
            {
                $\gamma_{uv}^{r}$ = CheckBand(u,v,$\eta_{u}^{r}$,$\psi_{v}^{r}$)\\
                Break
            }
            \eIf{CheckDAG(\{$\gamma_{uv}^{r}\}_{(u,v)\in\mathcal{E}^{r}}$)}
            {
                UpdateRe($\{\eta_{u}^{r}\}_{u\in\mathcal{V}^{G}},\{\psi_{v}^{r}\}_{v\in\mathcal{V}^{S}}, \{\gamma_{uv}^{r}\}_{(u,v)\in\mathcal{E}^{r}}$)\\
                $a_{r}^{t}$ = 1\\
            }
            {
                $a_{r}^{t}$ = $a_{r}^{t}$ + 1\\
            }
        }
    \caption{Pseudocode of GreedyOL.}
    \label{Greedy}
\end{algorithm}
\section{Analysis}\label{ANA}
Here, we present facts relating to the hardness of our problem, and computational complexity of MILP and GreedyOL.
%

\begin{prop}
    \label{Pro0}
    Problem \ref{milp} is at least NP-hard.  
\end{prop}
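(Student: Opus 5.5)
We prove the claim by a polynomial-time reduction. The paper states that the VNF Placement and Traffic Routing (VPTR) problem~\cite{NFV21}, which is NP-hard, is a special case of \ref{milp}. We therefore construct, from an arbitrary VPTR instance, a restricted instance of \ref{milp} whose optimal solutions encode optimal (or feasible) VPTR solutions. An alternative that is easier to verify self-containedly is to reduce from a classical problem such as multidimensional (or ordinary) 0/1 knapsack or bin packing, since the server capacity constraint \eqref{ServerCPU} with \eqref{OneServer} is exactly a packing constraint. I would state the reduction so that it works from VPTR, and note that the packing core is what makes it hard.

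\textbf{Construction.} Take a single time slot, $|\mathcal{T}|=1$, so that energy dynamics and AoS evolution trivialize. Set $a_r^{0}=1$, which gives $a_r^{1}=1$ if DAG $r$ runs and $a_r^{1}=2$ otherwise. Neutralize the energy side by choosing generous energy values:
\begin{itemize}
\item take $E_n^{max}$ and initial $e_n^{0}$ large, $W_n^{t}$ arbitrary, and $P_n^{1}=P_n^{0}$;
\item make $\rho$ and the device transmit cost negligible, with channel gains large, so that constraints \eqref{EH_Bound_1}--\eqref{EH_Bound_2}, \eqref{GWEVOLa}--\eqref{GWEVOLb}, \eqref{DEVOL} and \eqref{GWEVOL3}--\eqref{GWEVOL4} are always satisfiable;
\item give each gateway one device and ample CPU, so the VNF-C side imposes no restriction.
\end{itemize}
The remaining combinatorial core is the following: VNF-Ps with demands $C_v^r$ must be assigned to servers with capacities $C_s^{max}$, subject to link bandwidths $\bar R$ via \eqref{Link_1}--\eqref{Link_4}, which is precisely VNF placement with traffic routing. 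For a bin-packing style reduction, give each request one VNF-C and one VNF-P, let the servers be the bins, and set bandwidths to be non-binding. Then the objective $\max_r a_r^{1}$ equals $1$ if and only if every request can be packed. Deciding whether the optimum of \ref{milp} equals $1$ therefore decides the packing/VPTR feasibility question, and the reduction is clearly polynomial.

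\textbf{Verification.} We check both directions:
\begin{itemize}
\item A feasible VPTR/packing solution yields settings of $y$, $l$, $x$, $\phi$ and $z=1$ that satisfy every constraint of \ref{milp}. In particular \eqref{Device_Up1}--\eqref{Device_Up2} hold since one device per gateway is selected.
\item Conversely, an optimal solution of \ref{milp} with value $1$ forces $z_{rmo}^{1}=1$ for all $r$. By \eqref{Sink_2} and \eqref{OneServer} this yields a valid assignment respecting the capacity constraint \eqref{ServerCPU}.
\end{itemize}

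\textbf{Main obstacle.} The main obstacle is the careful bookkeeping that the neutralized constraints truly never bind. The device energy and \eqref{Device_Up1} interplay needs particular care because several VNF-Cs share one selected device. It is also the most fragile part to match the exact VPTR formulation of~\cite{NFV21} (its routing over multi-hop paths versus our single gateway–server–sink links). If matching that formulation becomes awkward, I would fall back on the bin-packing reduction above, whose correspondence is exact, to conclude that \ref{milp} is at least NP-hard.
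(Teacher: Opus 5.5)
Your proof is correct, but it runs the reduction in the opposite, and logically proper, direction from the paper's. The paper argues that VPTR is a special case of \ref{milp} by \emph{deleting} three groups of constraints: the energy constraints, the device-selection constraints \eqref{Device_Up1}--\eqref{Device_Up2}, and the AoS constraints \eqref{Node_1}--\eqref{Node_4}. Read literally, this exhibits VPTR as a relaxation of \ref{milp}, which does not transfer hardness. Deleting \eqref{Node_1}--\eqref{Node_4} also removes what defines the objective, so VPTR's goal is never tied to the min-max AoS. The paper's version is brief and appeals to a known hard problem from the NFV literature. However, it is sound only if reinterpreted as restricting to instances on which those constraints are slack. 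That is exactly what you do explicitly: one slot, $a_r^{0}=1$, $P_n^{1}=P_n^{0}$ with ample batteries, and non-binding gateway CPU and bandwidths. The optimum then equals $1$ if and only if $z_{rmo}^{1}=1$ for all $r$, i.e., if and only if the VNF-Ps pack into the servers under \eqref{ServerCPU}--\eqref{OneServer}. Taking bin packing as the source problem gives an exact, self-contained correspondence (indeed strong NP-hardness, even with one slot and one VNF-P per request), without having to match the multi-hop routing model of VPTR. You can therefore drop the VPTR hedging and make bin packing the main argument. Two bookkeeping points are worth stating. First, $a_r^{0}=1$ is essential, since with $a_r^{0}=0$ one gets $a_r^{1}=1$ regardless of $z_{rmo}^{1}$; it is also compatible with $\Psi=|\mathcal{T}|=1$ in \eqref{Node_2}. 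Second, in the forward direction, \eqref{Sink_1} and \eqref{Link_1}--\eqref{Link_2} hold by setting $x_{rui}^{1}=1$ at the gateway hosting $u$ and $l_{ruv}^{1is}=1$ exactly for the gateway--server pair hosting $(u,v)$.
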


\begin{proof}
    We show that the NP-hard VPTR problem~\cite{NFV21} is a special case of our problem.  Consider a substrate network and a set of requests.  Specifically, each substrate node has limited energy and computational resources.  Each link has limited bandwidth capacity.   Each request includes a set of VNFs with specific energy, computational and bandwidth requirements.  
    VPTR aims to place VNFs on substrate nodes, and finds routes between two VNFs without violating node and link resource capacity.  
    To this end, we remove three types of constraints from our problem: (i) energy constraints at devices, gateways and servers, which include Eq.~\eqref{EH_Bound_1}, \eqref{EH_Bound_2}, \eqref{GWEVOLa}, \eqref{GWEVOLb}, \eqref{DEVTXPOWb}, \eqref{GWEVOL3}, \eqref{GWEVOL4}, (ii) device selection constraints at each gateway, which include Eq.~\eqref{Device_Up1} and \eqref{Device_Up2}, (iii) AoS constraints of each DAG request, meaning Eq.\eqref{Node_1}-\eqref{Node_4}.
    Thus, VRTP is a special case of our problem, meaning our problem is at least NP-hard. 
\end{proof}

\begin{prop}
    \label{Pro1}
    The MILP \ref{milp} has $|\mathcal{T}|(\sum_{r\in\mathcal{R}}(|\mathcal{V}_{G}^{r}|+|\mathcal{V}_{S}^{r}|+|\mathcal{E}^{r}||\mathcal{V}_{G}|+|\mathcal{E}^{r}||\mathcal{V}_{S}|)+|\mathcal{E}_{G}|+4|\mathcal{V}_{S}|+4|\mathcal{R}|+2|\mathcal{V}|+5|\mathcal{V_{G}}|+|\mathcal{V}_{D}|)$  constraints and $(|\mathcal{V}_{D}|+\sum_{r\in\mathcal{R}}|\mathcal{V}_{G}^{r}||\mathcal{V}_{G}|+\sum_{r\in\mathcal{R}}|\mathcal{V}_{S}^{r}||\mathcal{V}_{S}|+\sum_{r\in\mathcal{R}}|\mathcal{V}_{G}^{r}||\mathcal{V}_{S}^{r}||\mathcal{V}_{G}||\mathcal{V}_{S}|+|\mathcal{R}|)|\mathcal{T}|$ decision variables. 
\end{prop}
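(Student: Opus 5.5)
The plan is a direct counting argument over the constraint families listed in \ref{milp} and the decision variables named in its $\min$ operator. Every constraint in the model is indexed by $t\in\mathcal{T}$, so both counts carry a common factor $|\mathcal{T}|$. After factoring it out, I would tally, for each equation referenced in \ref{milp}, the number of scalar constraints it generates in one slot. Each tally is simply the product of the cardinalities of the sets in its ``$\forall$'' quantifier, with $t$ removed.

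\textbf{Constraints.} I will group the families by the index set that generates them.

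\emph{DAG-indexed families.} Eq.~\eqref{CPU_Capb} and Eq.~\eqref{Sink_1} are indexed by $u\in\mathcal{V}_{G}^{r}$, and Eq.~\eqref{OneServer} and Eq.~\eqref{Sink_2} by $v\in\mathcal{V}_{S}^{r}$. After consolidating the per-VNF families, these contribute the terms $\sum_r(|\mathcal{V}_{G}^{r}|+|\mathcal{V}_{S}^{r}|)$. The routing equations Eq.~\eqref{Link_1} and Eq.~\eqref{Link_2} give $\sum_r|\mathcal{E}^{r}|(|\mathcal{V}_{G}|+|\mathcal{V}_{S}|)$.

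\emph{Link and per-request families.} Eq.~\eqref{Link_3} gives $|\mathcal{E}_{G}|$. The four AoS-linearization constraints Eq.~\eqref{Node_1}--\eqref{Node_4} give $4|\mathcal{R}|$.

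\emph{Node-indexed families.} The harvesting bounds Eq.~\eqref{EH_Bound_1}--\eqref{EH_Bound_2} give $2|\mathcal{V}|$. For gateways, the five families Eq.~\eqref{CPU_Capa}, \eqref{GWEVOLa}, \eqref{GWEVOLb}, \eqref{Device_Up1}, \eqref{Device_Up2} give $5|\mathcal{V}_{G}|$. For servers, the four families Eq.~\eqref{ServerCPU}, \eqref{GWEVOL3}, \eqref{GWEVOL4}, \eqref{Link_4} give $4|\mathcal{V}_{S}|$. The device evolution Eq.~\eqref{DEVOL} gives $|\mathcal{V}_{D}|$.

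Summing these groups and multiplying by $|\mathcal{T}|$ yields the stated constraint count.

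\textbf{Variables.} I will count only the optimized decision variables listed under the $\min$, treating $w,e,U,P,a,\lambda$ as auxiliary or state quantities determined by these. The counts per slot are:
\begin{itemize}
\item $\phi_{di}^{t}$: $|\mathcal{V}_{D}|$, since each device has a single associated gateway;
\item $x_{rui}^{t}$: $\sum_r|\mathcal{V}_{G}^{r}||\mathcal{V}_{G}|$;
\item $y_{rvs}^{t}$: $\sum_r|\mathcal{V}_{S}^{r}||\mathcal{V}_{S}|$;
\item $l_{ruv}^{tis}$: $\sum_r|\mathcal{V}_{G}^{r}||\mathcal{V}_{S}^{r}||\mathcal{V}_{G}||\mathcal{V}_{S}|$, bounding the edges $(u,v)\in\mathcal{E}^{r}$ by all VNF-C/VNF-P pairs;
\item $z_{rmo}^{t}$: $|\mathcal{R}|$.
\end{itemize}
Summing and multiplying by $|\mathcal{T}|$ gives the stated variable count.

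\textbf{Main obstacle.} The only delicate step is the bookkeeping convention: which families are merged, and which quantities count as decision variables versus auxiliary ones. The stated formula reflects particular grouping choices, namely the per-VNF families contributing $|\mathcal{V}_{G}^{r}|+|\mathcal{V}_{S}^{r}|$ and the $l$ count using $|\mathcal{V}_{G}^{r}||\mathcal{V}_{S}^{r}|$ rather than $|\mathcal{E}^{r}|$. I would state these conventions explicitly at the outset, so that each term in the formula maps one-to-one to a family above.
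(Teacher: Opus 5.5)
Your count is essentially the paper's proof. It uses the same five decision-variable families, with $l_{ruv}^{tis}$ bounded via all VNF-C/VNF-P pairs, and the same per-slot constraint tallies family by family, including crediting the pairs \eqref{CPU_Capb}/\eqref{Sink_1} and \eqref{OneServer}/\eqref{Sink_2} with only one constraint per VNF. You are right to call that merge a convention rather than a derivation: counted literally, those four families give $2\sum_{r\in\mathcal{R}}(|\mathcal{V}_{G}^{r}|+|\mathcal{V}_{S}^{r}|)$ per slot, so the stated formula holds only under the paper's grouping.
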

\begin{proof}
    We start with decision variables.  There are four types of decision variables, i.e., $\phi_{di}^{t}$, $x_{rui}^{t}$, $y_{rvs}^{t}$, $l_{ruv}^{tis}$ and $z_{rmo}^{t}$.  
    As each device is associated with at most one gateway in each time slot, we have $|\mathcal{V}_{D}||\mathcal{T}|$ of $\phi_{di}^{t}$.  
    Next, the variable $x_{rui}^{t}$ exists for each VNF-C of each DAG request at each gateway in each slot.  Thus, there are $\sum_{r\in\mathcal{R}}|\mathcal{V}_{G}^{r}||\mathcal{V}_{G}||\mathcal{T}|$ such variables.
    As for $y_{rvs}^{t}$, each VNF-P of each DAG request at each server in each slot has one $y_{rvs}^{t}$.  We thus have $\sum_{r\in\mathcal{R}}|\mathcal{V}_{S}^{r}||\mathcal{V}_{S}||\mathcal{T}|$ such variables.   
    The variable $l_{ruv}^{tis}$ exists for edge of each DAG request at each link in each slot.  Thus, we have $\sum_{r\in\mathcal{R}}|\mathcal{V}_{G}^{r}||\mathcal{V}_{S}^{r}||\mathcal{V}_{G}||\mathcal{V}_{S}||\mathcal{T}|$ such constraints.  
    Lastly, as the VNF-M of each DAG runs only on the sink in each slot, there are $|\mathcal{R}||\mathcal{T}|$ such variables.  
    In summary, We thus have $(|\mathcal{V}_{D}|+\sum_{r\in\mathcal{R}}|\mathcal{V}_{G}^{r}||\mathcal{V}_{G}|+\sum_{r\in\mathcal{R}}|\mathcal{V}_{S}^{r}||\mathcal{V}_{S}|+\sum_{r\in\mathcal{R}}|\mathcal{V}_{G}^{r}||\mathcal{V}_{S}^{r}||\mathcal{V}_{G}||\mathcal{V}_{S}|+|\mathcal{R}|)|\mathcal{T}|$ decision variables, as desired.

    We now quantify the number of constraints in MILP \ref{milp}.
    Each device, gateway and server has one constraint of type \eqref{EH_Bound_1} and \eqref{EH_Bound_2} in each slot, where there are $2|\mathcal{V}||\mathcal{T}|$ such constraints.
    Each gateway has one constraint of type \eqref{CPU_Capa}, \eqref{GWEVOLa} and \eqref{GWEVOLb}, \eqref{Device_Up1} and \eqref{Device_Up2}, resulting in $5|\mathcal{V}_{G}||\mathcal{T}|$ constraints.  
    Each device has one constraint of type \eqref{DEVTXPOWb} in each slot, which results in $|\mathcal{V}_{D}|$ constraints.
    Each server has one constraint of type \eqref{ServerCPU}, \eqref{GWEVOL3}, and \eqref{GWEVOL4} in each slot, we have $3|\mathcal{V}_{S}||\mathcal{T}|$ such constraints.
    The VNF-C of each DAG request has one constraint of type \eqref{CPU_Capb} and \eqref{Sink_1} in each slot, resulting in $\sum_{r\in\mathcal{R}}|\mathcal{V}_{G}^{r}||\mathcal{T}|$ such constraints.  
    The VNF-P of each DAG request has one constraint of type \eqref{OneServer} and \eqref{Sink_2}, resulting in $\sum_{r\in\mathcal{R}}|\mathcal{V}_{S}^{r}||\mathcal{T}|$ such constraints.
    Next, as for constraint \eqref{Link_1} and \eqref{Link_2}, they exist on one edge of each DAG in each slot for each gateway and server, respectively.  Thus, we have $\sum_{r\in\mathcal{R}}$ $|\mathcal{E}^{r}||\mathcal{T}|(|\mathcal{V}_{G}|+|\mathcal{V}_{S}|)$ such constraints.
    Constraint \eqref{Link_3} exists for each link between gateways and servers in each slot.  Thus, we have $|\mathcal{E}_{G}||\mathcal{T}|$ such constraints.
    Constraint \eqref{Link_4} exists for each link between servers and sink in each slot, where we have $|\mathcal{V}_{S}||\mathcal{T}|$ such constraints.
    Lastly, each DAG request has one constraint of type \eqref{Node_1}-\eqref{Node_4}.  Hence, we have $4|\mathcal{R}||\mathcal{T}|$,
    In summary, we have 
    $|\mathcal{T}|(\sum_{r\in\mathcal{R}}(|\mathcal{V}_{G}^{r}|+|\mathcal{V}_{S}^{r}|+|\mathcal{E}^{r}||\mathcal{V}_{G}|+|\mathcal{E}^{r}||\mathcal{V}_{S}|)+|\mathcal{E}_{G}|+4|\mathcal{V}_{S}|+4|\mathcal{R}|+2|\mathcal{V}|+5|\mathcal{V_{G}}|+|\mathcal{V}_{D}|)$ constraints
\end{proof}

Next, we consider the run-time complexity of RHCOP.  Recall that it has two parts, i.e., offline training and online execution. Since the training phase is performed offline using historical data, it does not consume real-time computational resources. Thus, we focus only on its online execution part. 
\begin{prop}
    \label{ProO1}
    The run-time complexity of RHCOP per time slot is $\mathcal{O}((|\mathcal{V}| + |\mathcal{E}_{W}|) K + 2^{KQ})$.
\end{prop}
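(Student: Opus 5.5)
We would prove Proposition~\ref{ProO1} by walking through the online part of Algorithm~\ref{RHCOP} (lines 14--26), bounding the cost of each step within a single slot $t$, and adding the bounds. The trained GMMs $GW_n$ and $Gg_d$ are fixed during this part, each with $N_G$ components. Since $N_G$ does not grow with the network or the window, we treat it as a constant.

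\emph{Prediction phase.} For each node $n\in\mathcal{V}$, \emph{PredictEH(.)} draws or evaluates one estimate $\bar{W}_n^t$ per slot of the window $\mathcal{K}$. With $N_G$ treated as constant, each estimate costs $\mathcal{O}(1)$, so this loop costs $\mathcal{O}(|\mathcal{V}|K)$. By the same reasoning, the loop over $(d,i)\in\mathcal{E}_W$ that calls \emph{PredictCha(.)} costs $\mathcal{O}(|\mathcal{E}_W|K)$. Together these give the term $\mathcal{O}((|\mathcal{V}|+|\mathcal{E}_W|)K)$.

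\emph{Decision phase.} \emph{SolveMILP(.)} solves \ref{milp} restricted to the window $\mathcal{K}$. By Proposition~\ref{Pro1}, applied with $|\mathcal{T}|$ replaced by $K$, the number of binary decision variables is $KQ$. Here we define
\[
Q = |\mathcal{V}_{D}|+\sum_{r\in\mathcal{R}}|\mathcal{V}_{G}^{r}||\mathcal{V}_{G}|+\sum_{r\in\mathcal{R}}|\mathcal{V}_{S}^{r}||\mathcal{V}_{S}|+\sum_{r\in\mathcal{R}}|\mathcal{V}_{G}^{r}||\mathcal{V}_{S}^{r}||\mathcal{V}_{G}||\mathcal{V}_{S}|+|\mathcal{R}|,
\]
which is the per-slot variable count. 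In the worst case, a branch-and-bound solver enumerates all $2^{KQ}$ assignments of the binary variables. For each assignment, the remaining continuous variables ($e_n^t$, $w_n^t$, $a_r^t$, $\lambda_r^t$) are fixed or determined by feasibility checks or an LP that is polynomial in the instance size. The polynomial factor is dominated by $2^{KQ}$, so we absorb it and obtain $\mathcal{O}(2^{KQ})$. Finally, \emph{UpdateParameters(.)} only writes the slot-$t$ part of the solution, which costs $\mathcal{O}(Q)$ and is likewise absorbed.

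Summing the two phases gives the stated $\mathcal{O}((|\mathcal{V}|+|\mathcal{E}_W|)K+2^{KQ})$.

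The main obstacle is justifying the exponential term. The per-assignment LP cost has to be absorbed into $2^{KQ}$ rather than multiplied by it. We would argue this by taking the enumeration count as the measure of solver work, which is the convention implicit in how Proposition~\ref{Pro1} is used to claim intractability, and by stating $Q$ explicitly so that the bound is well defined.
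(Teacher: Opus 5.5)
Your proposal is correct and follows essentially the same route as the paper. Like the paper, you bound the prediction phase by $\mathcal{O}((|\mathcal{V}|+|\mathcal{E}_{W}|)K)$ GMM evaluations, bound the decision phase by worst-case branch-and-bound over the $KQ$ binary variables of the windowed MILP (with $Q$ taken from Proposition~\ref{Pro1}), and add the two. One correction to your wording, though: the polynomial per-node LP cost is a multiplicative factor, so it cannot be absorbed into $\mathcal{O}(2^{KQ})$ merely because it is smaller; the bound holds only under the convention you end up adopting, namely counting enumerated assignments or branch-and-bound nodes, which is exactly what the paper assumes implicitly when it asserts the $\mathcal{O}(2^{KQ})$ branch-and-bound bound without justification.
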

\begin{proof}
    In each slot, RHCOP's online phase includes a prediction phase and a decision phase. In the prediction phase, RHCOP calls the trained GMMs over a time window with a size of $K$, see Lines 17-22. The time complexity of this phase is $\mathcal{O}((|\mathcal{V}| + |\mathcal{E}_{W}|)K)$. 
    Next, in the decision phase, RHCOP solves the formulated MILP over the given time window. 
    Let $Q$ denote the number of integer variables per time slot in the MILP.  As shown in Proposition \ref{Pro1}, the value of $Q$ is $|\mathcal{V}_{D}|+\sum_{r\in\mathcal{R}}|\mathcal{V}_{G}^{r}||\mathcal{V}_{G}|+\sum_{r\in\mathcal{R}}|\mathcal{V}_{S}^{r}||\mathcal{V}_{S}|+\sum_{r\in\mathcal{R}}|\mathcal{V}_{G}^{r}||\mathcal{V}_{S}^{r}||\mathcal{V}_{G}||\mathcal{V}_{S}|+|\mathcal{R}|$.  The worst-case complexity of solving this MILP using standard branch-and-bound methods is $\mathcal{O}(2^{KQ})$.
    Lastly, both of the prediction phase and the decision phase run in each slot.  Hence, the time complexity of the online execution in RHCOP is $\mathcal{O}((|\mathcal{V}| + |\mathcal{E}_{W}|) K + 2^{KQ})$. 
\end{proof}
%
Note that the online execution phase of RHCOP can be sped up by pre-computing solutions offline and storing them in a neural network, see \cite{10870303} for an example work.  This means an operator does not have to solve an MILP in each time slot.   Instead, it retrieves the corresponding solution from the neural network based on energy information  at gateways and servers.
%

\begin{prop}
    \label{ProO}
    The time complexity of GreedyOL is $\mathcal{O}(|\mathcal{T}||\mathcal{R}|(|\mathcal{R}|log|\mathcal{R}|+|\mathcal{V}_{G}^{r}||\mathcal{V}_{G}||\mathcal{N}_{i}|+|\mathcal{V}_{S}^{r}||\mathcal{V}_{S}|+|\mathcal{E}^{r}|))$.
\end{prop}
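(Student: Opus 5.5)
The plan is to count the work GreedyOL (Algorithm~\ref{Greedy}) does in one time slot, then multiply by $|\mathcal{T}|$, since the algorithm runs once per slot. Within a slot, I will bound the cost of the \emph{Sort} phase and then the per-DAG cost of the \emph{Embed} and \emph{Update} phases. I will treat $|\mathcal{V}_{G}^{r}|$, $|\mathcal{V}_{S}^{r}|$, $|\mathcal{E}^{r}|$ and $|\mathcal{N}_{i}|$ as their maximum values over $r$ and $i$, as the statement implicitly does.

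First, \emph{SortDAoS(.)} in line 1 orders $|\mathcal{R}|$ AoS values with a comparison sort, which costs $\mathcal{O}(|\mathcal{R}|\log|\mathcal{R}|)$.

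Second, for each DAG $r\in\bar{\mathcal{R}}$, the VNF-C loop considers up to $|\mathcal{V}_{G}^{r}|$ VNF-Cs. For each VNF-C it scans up to $|\mathcal{V}_{G}|$ gateways. Each call to \emph{GatewayECD(.)} checks gateway energy and CPU in $\mathcal{O}(1)$, and then checks whether some associated device has enough energy to sense and transmit at rate $\Gamma_{ru}$. The device check uses Eq.~\eqref{DEVTXPOWa} and scans $\mathcal{N}_{i}$, so it costs $\mathcal{O}(|\mathcal{N}_{i}|)$. This step therefore costs $\mathcal{O}(|\mathcal{V}_{G}^{r}||\mathcal{V}_{G}||\mathcal{N}_{i}|)$.

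The VNF-P loop similarly calls \emph{ServerEC(.)} at $\mathcal{O}(1)$ per server, for a cost of $\mathcal{O}(|\mathcal{V}_{S}^{r}||\mathcal{V}_{S}|)$. \emph{CheckBand(.)} runs once per edge at $\mathcal{O}(1)$, giving $\mathcal{O}(|\mathcal{E}^{r}|)$. \emph{CheckDAG(.)} and \emph{UpdateRe(.)} touch each VNF and edge of $r$ once, so they are absorbed into the previous terms. Summing over the $|\mathcal{R}|$ DAGs gives $\mathcal{O}(|\mathcal{R}|(|\mathcal{V}_{G}^{r}||\mathcal{V}_{G}||\mathcal{N}_{i}|+|\mathcal{V}_{S}^{r}||\mathcal{V}_{S}|+|\mathcal{E}^{r}|))$ per slot.

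Third, the per-slot total is the sort cost plus this embedding cost. Since $|\mathcal{R}|\log|\mathcal{R}|\le|\mathcal{R}|\cdot|\mathcal{R}|\log|\mathcal{R}|$, the sort term can be factored inside the $|\mathcal{R}|(\cdot)$ bracket as a (loose) upper bound. Multiplying by $|\mathcal{T}|$ then gives the stated bound.

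The main obstacle is justifying the unit costs of the subroutines, which the pseudocode leaves abstract. I need to argue that \emph{GatewayECD(.)} requires a scan of $\mathcal{N}_{i}$ but only constant work per device, because $P_{di}^{t}$ is a closed-form expression and the energy comparison is a single test. I also need that resource bookkeeping in \emph{UpdateRe(.)} is constant per touched element. A secondary point is to state explicitly that the $|\mathcal{R}|\log|\mathcal{R}|$ term sitting inside the bracket is an over-approximation rather than a tight count.
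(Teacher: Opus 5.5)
Your proposal is correct and follows essentially the same counting as the paper: the $\mathcal{O}(|\mathcal{R}|\log|\mathcal{R}|)$ sort, the per-DAG loop costs $|\mathcal{V}_{G}^{r}||\mathcal{V}_{G}||\mathcal{N}_{i}|$, $|\mathcal{V}_{S}^{r}||\mathcal{V}_{S}|$ and $|\mathcal{E}^{r}|$, then multiplication by $|\mathcal{R}|$ and $|\mathcal{T}|$. The only differences are cosmetic. The paper charges \emph{UpdateRe(.)} as $\mathcal{O}(|\mathcal{V}_{G}|+|\mathcal{V}_{S}|)$, a scan of all gateways and servers, rather than per touched element; that cost is likewise absorbed. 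The paper also places the sort term inside the $|\mathcal{R}|$ factor without comment, whereas you explicitly flag it as a loose over-approximation.
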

\begin{proof}
    The Sort function has time complexity $\mathcal{O}(|\mathcal{R}|log|\mathcal{R}|)$.
    In lines 3-8, each VNF-C calls function GatewayECD(.) that runs at most $|\mathcal{N}_{i}|$ times.  Thus, lines 3-8 run at most $|\mathcal{V}_{G}^{r}||\mathcal{V}_{G}||\mathcal{N}_{i}|$ times to select a gateway for VNF-C. 
    Lines 9-14 run at most $|\mathcal{V}_{S}^{r}||\mathcal{V}_{S}|$ times when selecting a server for VNF-P.  
    Lines 15-18 run at most $|\mathcal{E}^{r}|$ times when checking whether link $(\eta_{u}^{r}, \psi_{v}^{r})$ supports edge (u,v) or not.  
    {\em UpdateRe(.)} searches all gateways and servers, and determine whether to update their energy, computing and bandwidth resources.  
    Its time complexity is thus $\mathcal{O}(|\mathcal{V}_{G}|+|\mathcal{V}_{S}|)$.
    Next, we observe that lines 2-25 run at most $|\mathcal{R}|$ times to check all DAGs.  Further, as GreedyOL works over $|\mathcal{T}|$ slots, the time complexity of GreedyOL is $\mathcal{O}(|\mathcal{T}||\mathcal{R}|(|\mathcal{R}|log|\mathcal{R}|+|\mathcal{V}_{G}^{r}||\mathcal{V}_{G}||\mathcal{N}_{i}|+|\mathcal{V}_{S}^{r}||\mathcal{V}_{S}|+|\mathcal{E}^{r}|)$
\end{proof}
\section{Evaluation}\label{EVAL}
We conduct our simulation using Python 3.8.5 and Gurobi 9.9.1.  Gateways, devices, servers and sink $s$ are randomly deployed on a $1000\times1000$ $m^{2}$ area.   Devices have a solar panel of size $30\times30$ ${cm}^{2}$ with efficiency 20\%.  
Tables \ref{HMM1} and \ref{HMM2} list parameter values of the solar energy arrival process \cite{Data_Driven_Solar}.
As per \cite{Data_Driven_Solar}, the mean of four solar energy arrival states are 94.6, 76.0, 45.6 and 17.9 $mW/cm^2$.  Further, their respective variance is 0.31, 1.55, 1.48 and 0.71.
The battery size of a device, a gateway and a server is set to 10 J, 100 J and 100 J, respectively. 
Gateways and servers have a CPU capacity of 1000 Megacycles~\cite{NFV_Energy_2020_10}.  
Their base and peak power are set to 170 W and 500 W, respectively \cite{NFV_Energy_2020_10}.
We set $\rho$ to 150 nJ/bit.          

We set $B$ and $N_{0}$ to 200 kHz and -95 dBm/Hz, respectively.  The path loss exponent $\alpha$ is 2.5 and the path loss at the reference distance of one meter is 30 dB.  
Wired links have a capacity of 1000 Mbps \cite{NFV_Energy_2020_10}.   
Each DAG request has five VNFs that includes at least one VNF-C and one VNF-P.  Other VNF types are generated randomly.
The computation resource of each VNF is drawn uniformly from the range [10,100] (in Megacycles).  Each VNF-C randomly connects to a VNF-P with a probability of 0.9. 
The bandwidth demand between VNFs is randomly drawn from [10, 50] (in kb/s).  
The data requirement of VNF-C is 100 kb/s, and we set $\Psi = |\mathcal{T}|$.

\begin{table}[htbp]
\renewcommand{\arraystretch}{1.2}
\caption{Mean and variance of the Solar Energy Arrival}\label{HMM1}
\centering
\begin{tabular*}{\linewidth}{@{\extracolsep{\fill}} c c c c c @{}} 
\hline
    State & Poor & Fair & Good & Excellent \\ 
\hline
    Mean $(mW/cm^{2})$ & 1.75 & 4.21 & 7.02 & 9.38 \\ 
\hline
    Variance & 0.65 & 1.04 & 2.34 & 0.54\\
\hline
\end{tabular*}
\end{table}
\begin{table}[htbp] 
\renewcommand{\arraystretch}{1.5}
\caption{Transition Probability of the Solar Energy Arrival} \label{HMM2}
\centering
\begin{tabular*}{\linewidth}{@{\extracolsep{\fill}} c c c c c @{}}
\hline
State & Poor & Fair & Good & Excellent \\ \hline
Poor & 0.979 & 0.015 & 0.006 & 0 \\ \hline
Fair & 0.005 & 0.988 & 0.007 & 0 \\ \hline
Good & 0.006 & 0.009 & 0.975 & 0.010 \\ \hline
Excellent & 0 & 0 & 0.007 & 0.993 \\
\hline
\end{tabular*}
\end{table}

\subsection{Benchmark Solution}
We benchmark against two solutions called Random and GMMPre.
Random selects a random number of DAGs to embed in each slot, and then runs MILP.  If the MILP is solvable, Random embeds the DAGs and updates the computing, bandwidth and energy resources of devices, gateways and servers, respectively.  
Otherwise, it does not embed the selected DAGs, and moves to the next slot. 
GMMPre employs GMM to estimate the energy arrivals of $K^{'}=8$ future time slots.  Next, it calculates the average energy harvesting rate of each device, gateway and server.  Lastly, it uses this average value to replace the energy level in GreedyOL, and then calls GreedyOL.
We study eight parameters, i.e., $|\mathcal{V}_{G}|$, $|\mathcal{V}_{S}|$, $|\mathcal{N}_{i}|$, $|\mathcal{R}|$, $K$, $L$, VNF-C/VNF-P in a DAG, and number of VNFs.
We report the average result of 50 runs.

\begin{figure*}
	\centering
	\subfloat[] 
	{\includegraphics[width=.25\textwidth] {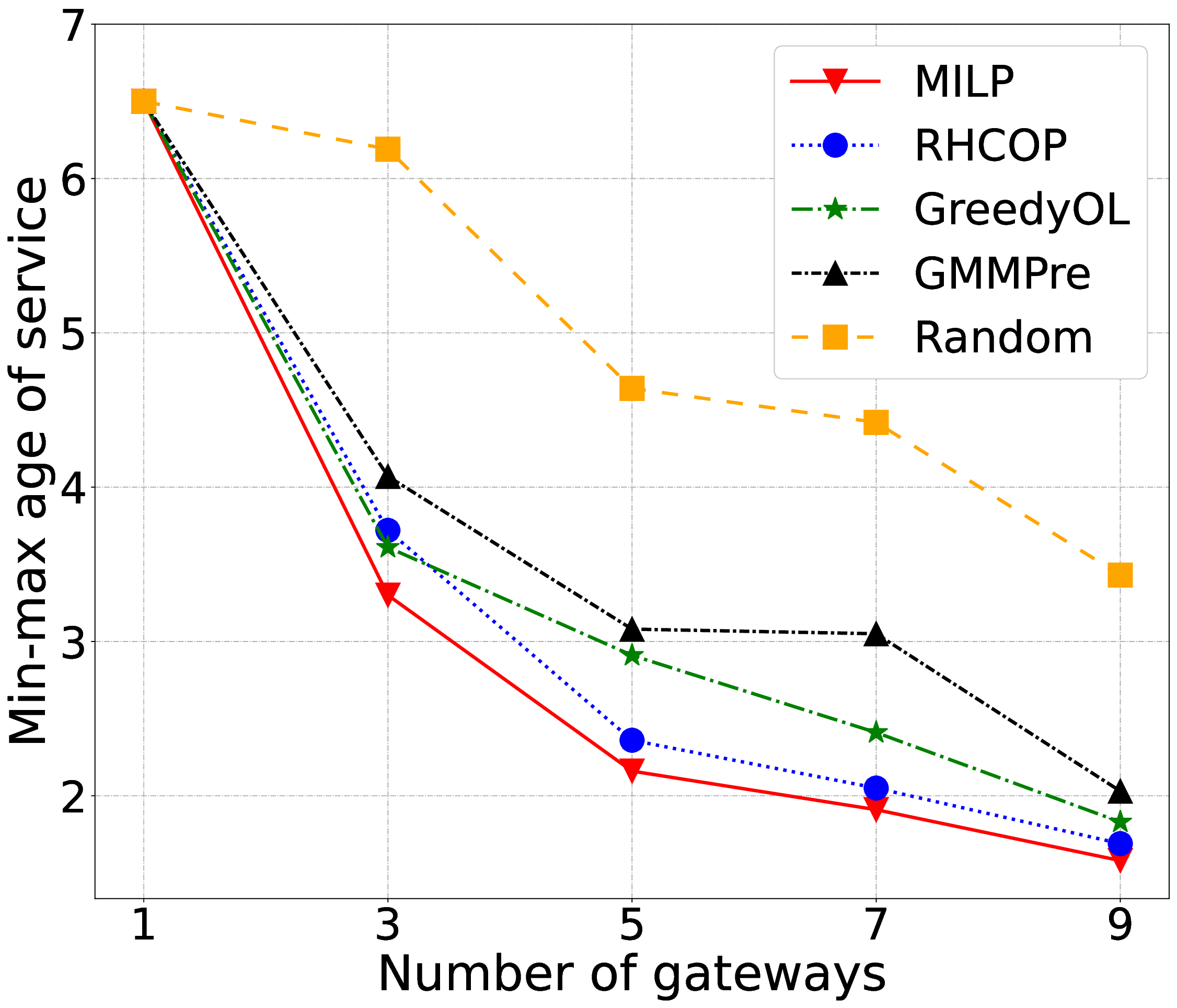}\label{gateway}}\hfill
	\subfloat[] 
	{\includegraphics[width=.25\textwidth] {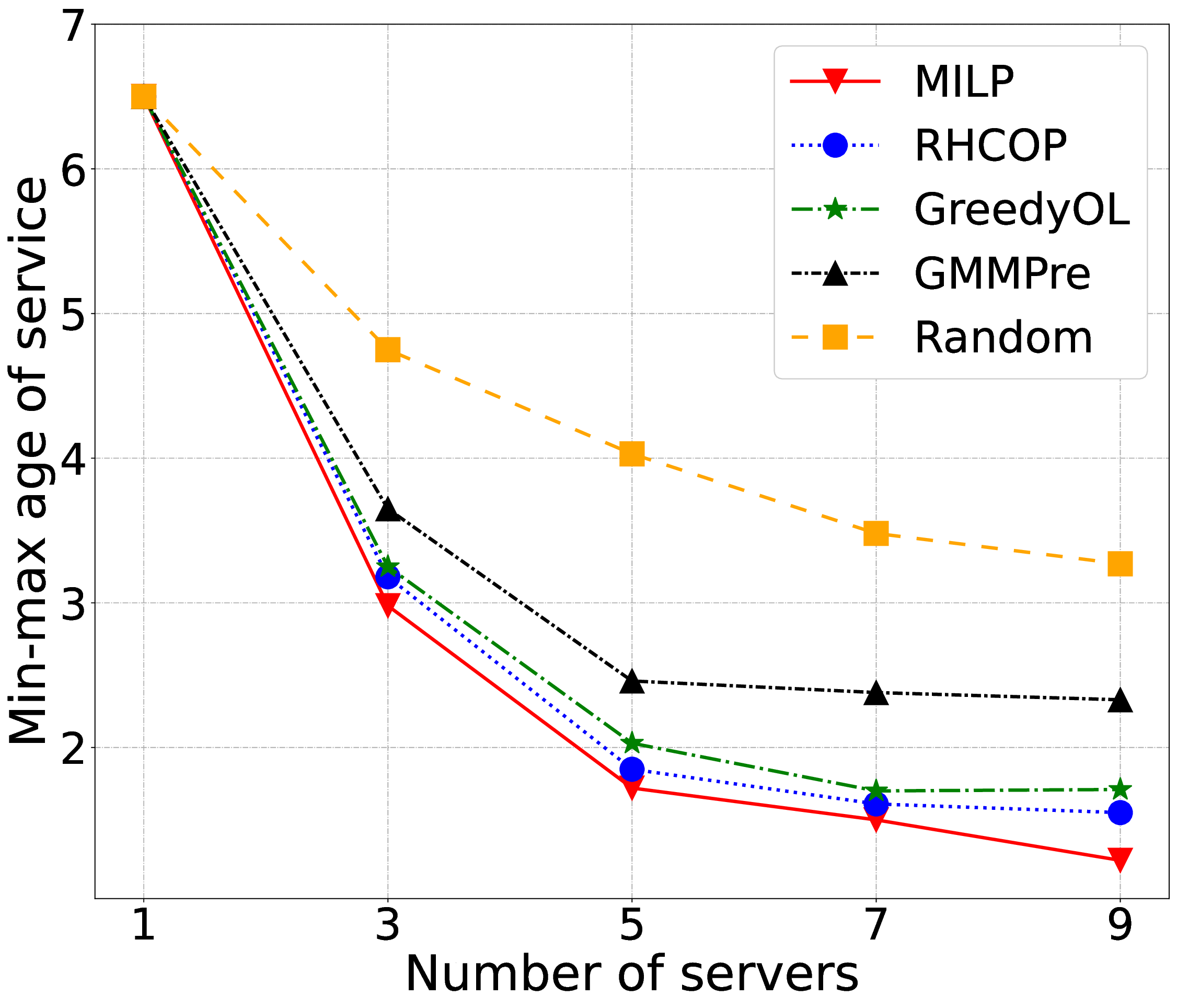}\label{server}}\hfill
	\subfloat[] 
	{\includegraphics[width=.25\textwidth] {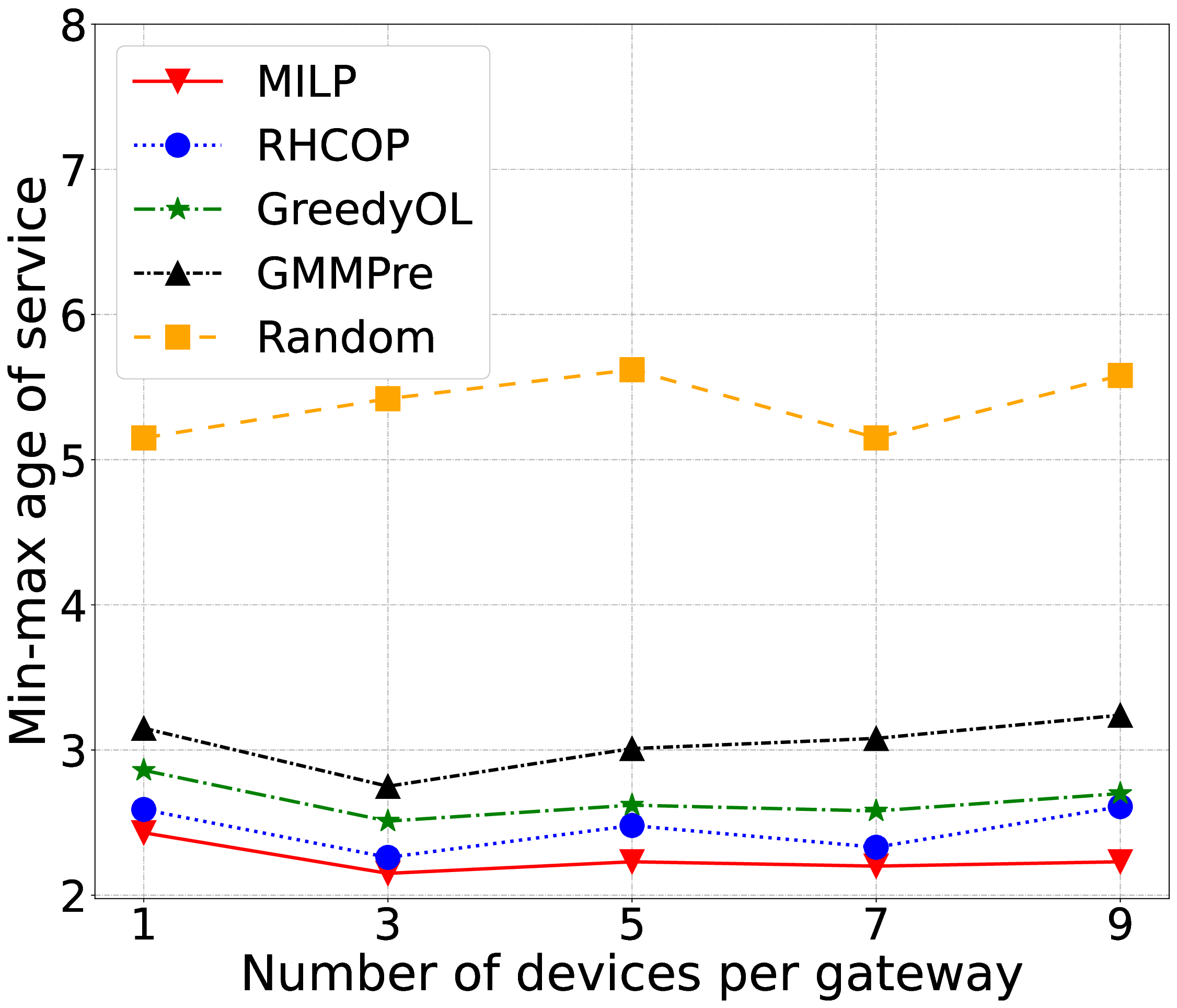}\label{DeviceFig}}\hfill
	\subfloat[] 
	{\includegraphics[width=.25\textwidth] {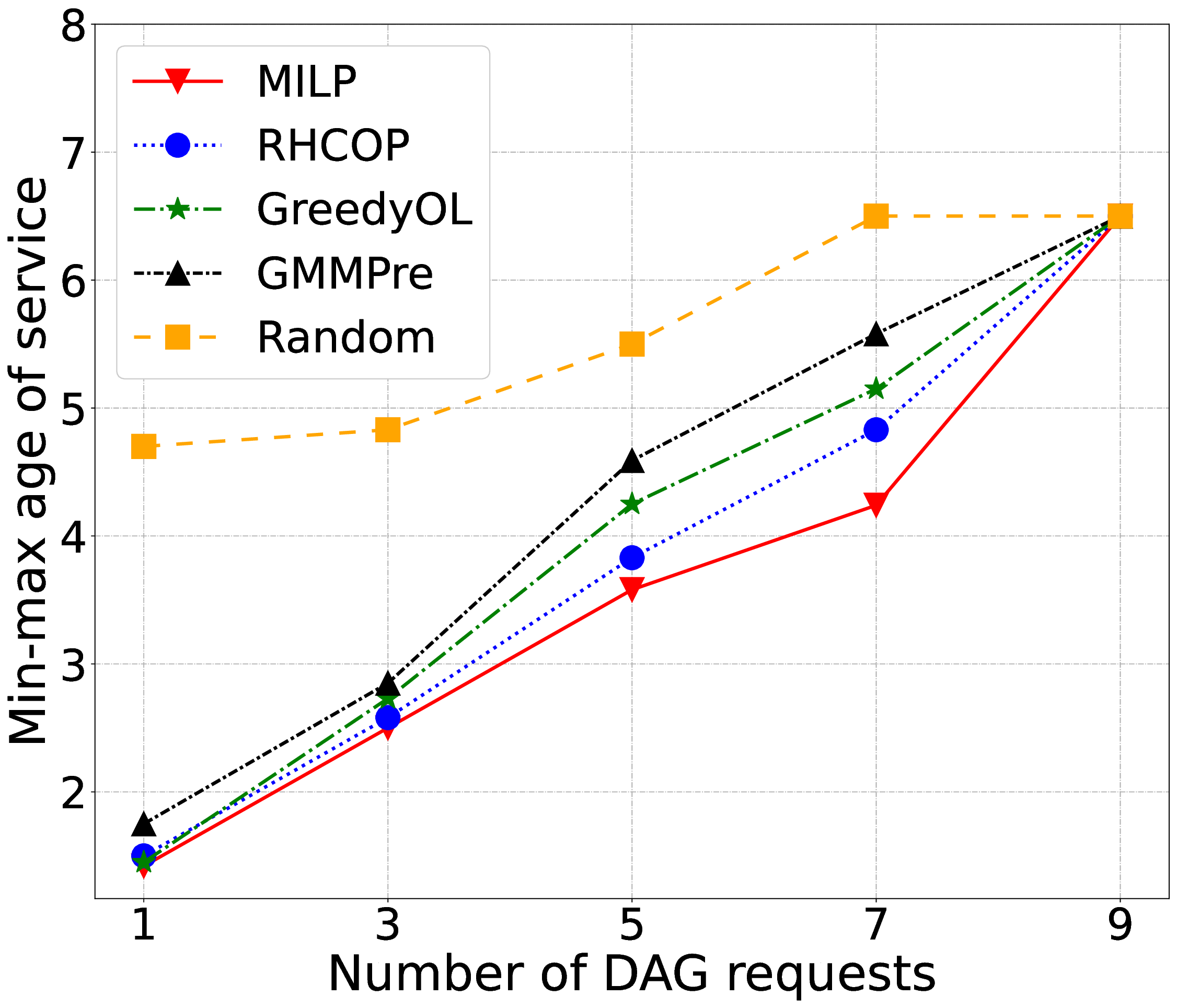}\label{DAG}}\hfill
	\subfloat[] 
	{\includegraphics[width=.25\textwidth] {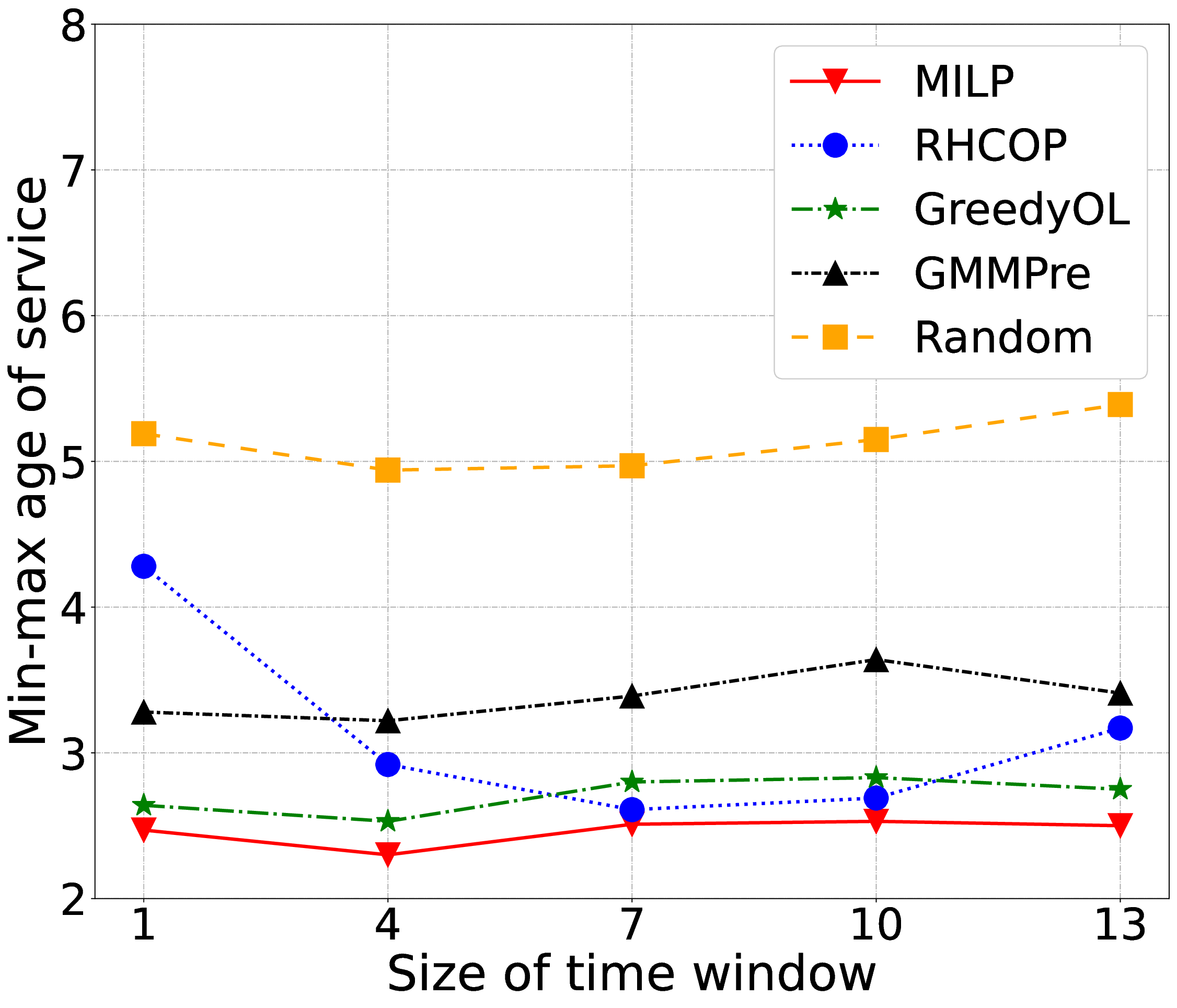}\label{Window}}\hfill
	\subfloat[] 
	{\includegraphics[width=.25\textwidth] {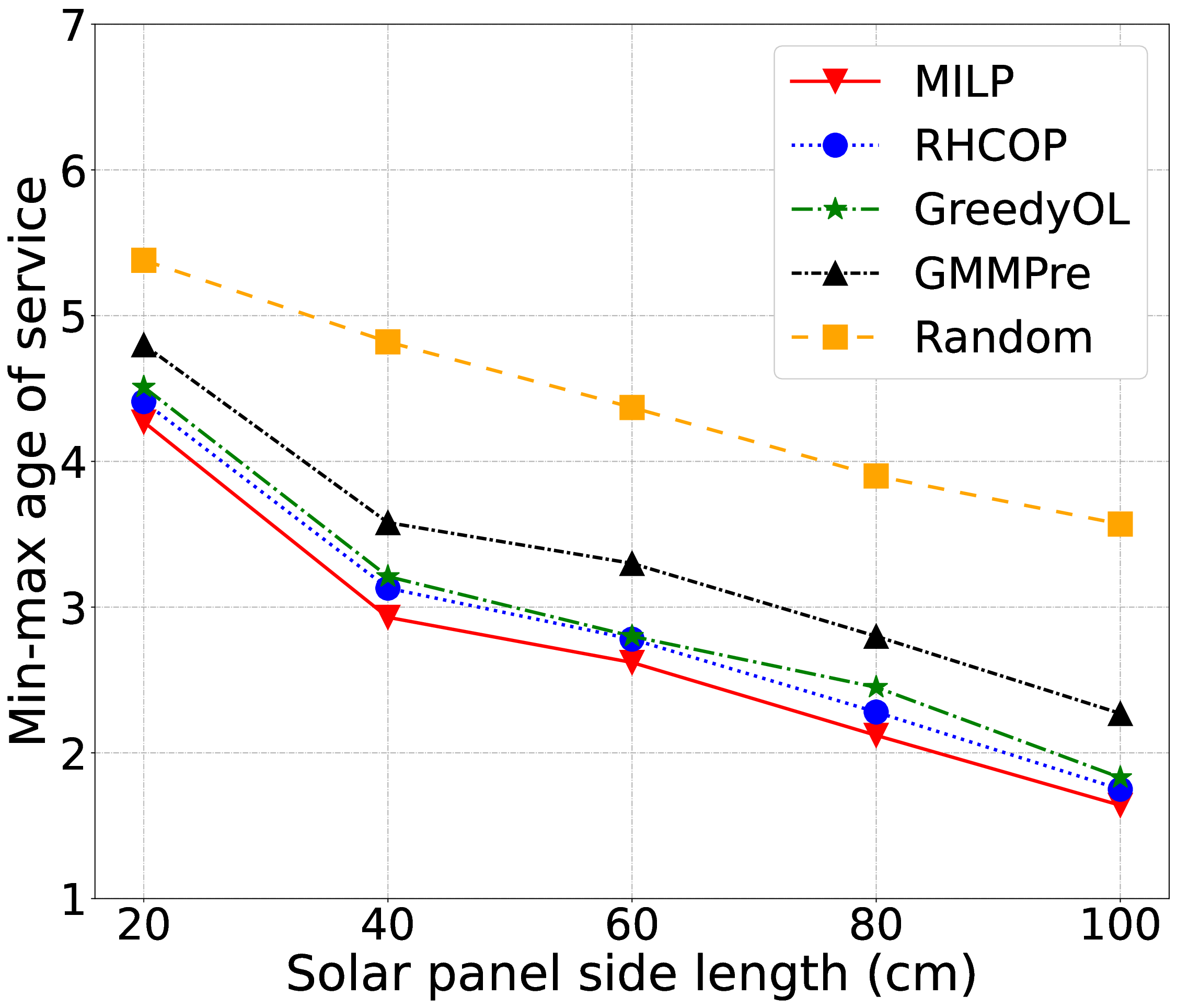}\label{Solar_P}}\hfill
    \subfloat[] 
	{\includegraphics[width=.25\textwidth] {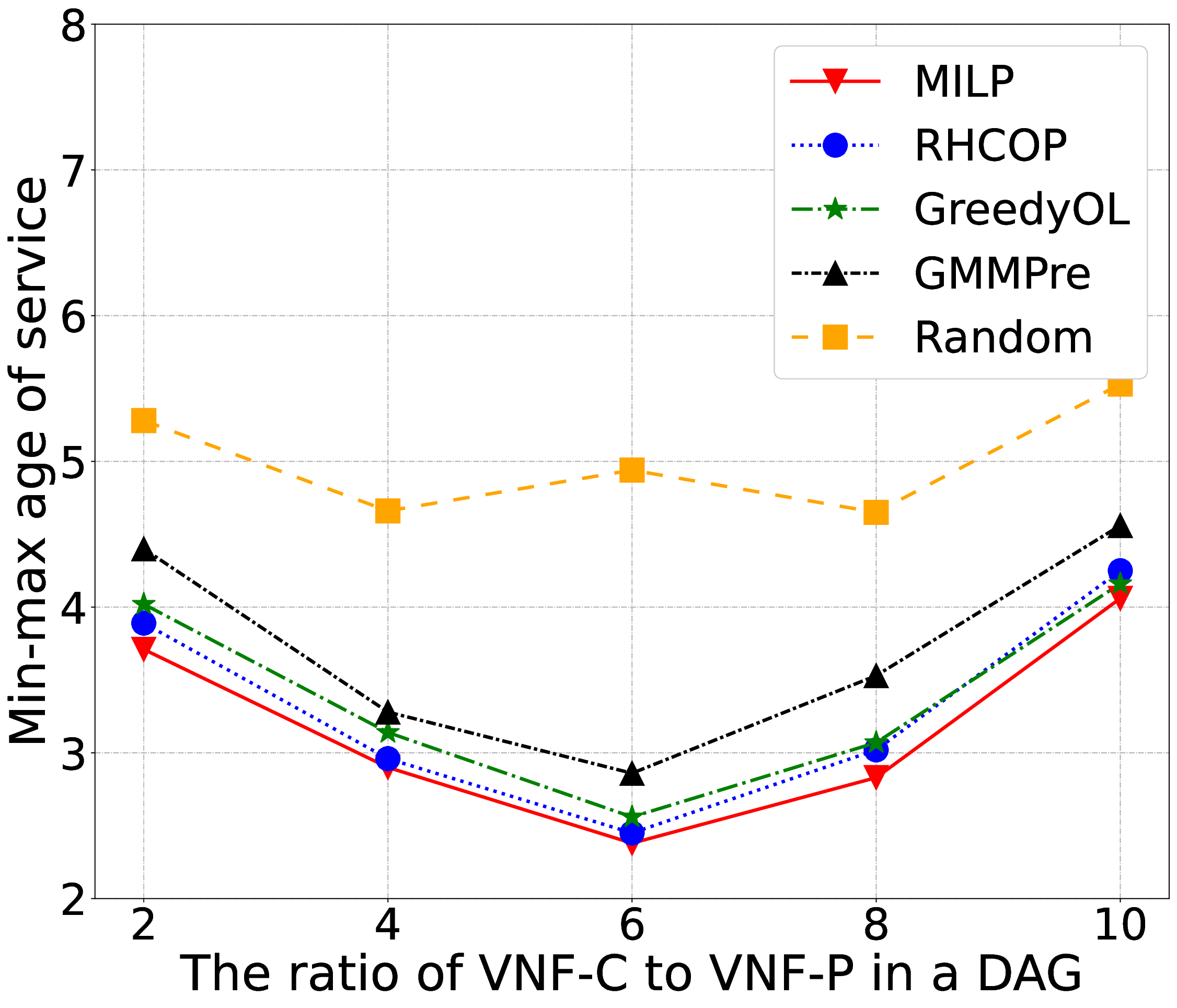}\label{VNFRatio}}\hfill
    \subfloat[] 
	{\includegraphics[width=.25\textwidth] {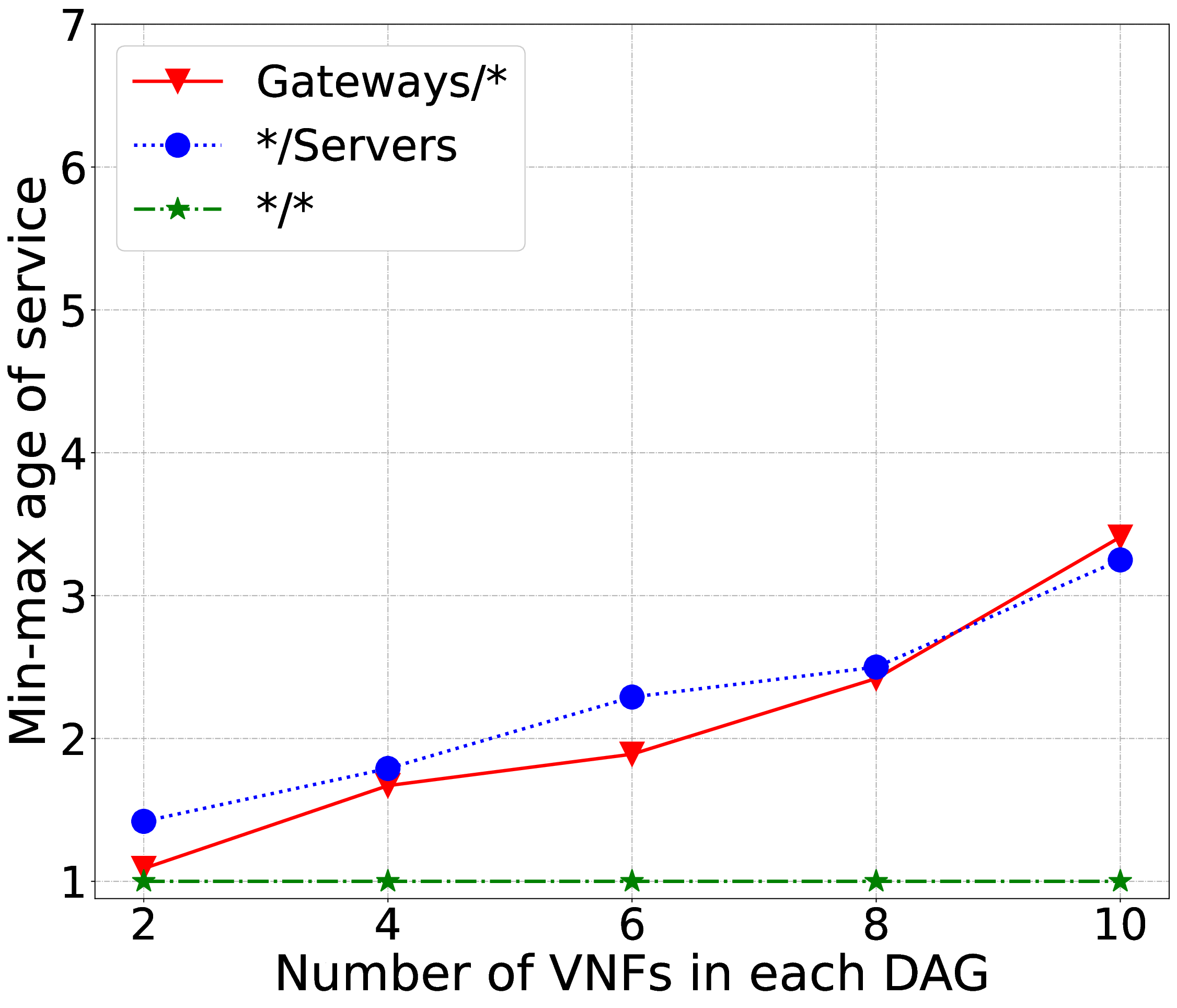}\label{EHCom}}\hfill
	\caption{Impact of various parameters on min-max AoS: (a) number of gateways $|\mathcal{V}_{G}|$, (b) number of servers $|\mathcal{V}_{S}|$, (c) number of devices associated to a gateway $|\mathcal{N}_{i}|$, (d) number of DAG requests $|\mathcal{R}|$, (e) window size $K$, (f) panel size of servers and gateways $L$, (g) the ratio of VNF-C to VNF-P in a DAG, (h) infinite resource.}\label{Small_Net}  
\end{figure*}
\subsection{Simulation Results}
We first study how the number of gateways, i.e., $|\mathcal{V}_{G}|$, impact max-min AoS.
We set $|\mathcal{V_{S}}|$, $|\mathcal{N}_{i}|$, $|\mathcal{R}|$, $K$ and $L$ to 3, 3, 3, 8 and 30 cm, respectively.  There are 12 slots, i.e., $|\mathcal{T}|=12$.  
As shown in Figure \ref{gateway}, min-max AoS decreases with $|\mathcal{V}_{G}|$. 
For example, the min-max AoS of MILP and RHCOP decreases from 6.50 and 6.50 to 1.58 and 1.69, respectively.  The value of GreedyOL, GMMPre and Random decreases from 6.50 to 1.83, 2.03 and 3.43, respectively.  
The reason is because there are more potential gateways to deploy the VNF-Cs of a DAG in each slot when $|\mathcal{V}_{G}|$ increases.  
Thus, more computation, bandwidth and energy resources can be used to support a VNF-C in each slot, which reduces AoS. 
Another observation is that the min-max AoS of all solutions is bounded to 6.50 when $|\mathcal{V}_{G}|=1$.  
This reason is because one gateway has insufficient energy, computing and bandwidth resource to service DAG requests.
All DAG requests cannot be active during $|\mathcal{T}|$ slots, causing $a_{s}^{t}$ to increase by one in each slot.  As the AoS of each DAG request over $|\mathcal{T}|=12$ frames is $1+2+\dots+|\mathcal{T}|$, its average value is 6.5.
Next, we vary the number of servers, where $|\mathcal{V}_{S}| \in \{1,3,5,7,9\}$. 
The value of $|\mathcal{V_{G}}|$, $|\mathcal{N}_{i}|$, $|\mathcal{R}|$ is set to three.  The value of $K$, $L$ and $|\mathcal{T}|$ is 8, 30 cm and 12 slots, respectively.
Referring to Figure \ref{server}, the min-max AoS of MILP decrease from 6.50 to 1.22 as $|\mathcal{V}_{S}|$ increases from one to nine. RHCOP shows a reduction from 6.50 to 1.55, while GreedyOL decreases from 6.50 to 1.71.  The value of GMMPre and  Random decreases by 64.15\% and 49.69\% from 6.50, respectively.  
The reasons are as follows.  First, more servers provide more energy, communication and computing resources for VNF-Ps in DAG requests. 
Further, a higher $|\mathcal{V}_{S}|$ value leads to more paths from gateways to the sink, where a DAG request selects a path that consumes less energy to deliver sensor data.  This then allows a DAG to activate more times over $|\mathcal{T}|$ slots in higher $|\mathcal{V}_{S}|$ case.
Next, we consider the number of devices associated to each gateway, i.e., $|\mathcal{N}_{i}|$.  
We set $|\mathcal{V_{G}}|=|\mathcal{V}_{S}|=|\mathcal{R}|=3$, $K=8$, $L=30$ cm and $|\mathcal{T}|=12$ slots.
As per Figure \ref{DeviceFig}, the number of devices associated at each gateway has limited impact on min-max AoS.  
For example, the min-max AoS of MILP ranges between 2.43 to 2.23 when $|\mathcal{N}_{i}|$ increases from one to nine.  For RHCOP, it ranges between 2.59 to 2.61 whereas for GreedyOL it ranges from 2.86 to 2.70.  
The reason is because devices harvest sufficient energy for sensing and data upload.  As a result, the number of devices associated to each gateway has limited impact on the min-max AoS.     
Next, we consider the number of DAG requests, where $|\mathcal{R}|\in\{1,3,5,7,9\}$. 
We set $|\mathcal{V_{S}}|$, $|\mathcal{V_{G}}|$, $|\mathcal{N}_{i}|=3$.  The value of $K$, $L$ and $|\mathcal{T}|$ is 8, 30 cm and 12 slots, respectively.  
As shown in Figure \ref{DAG}, the min-max AoS of MILP increased by 357.75\% from 1.42 when $|\mathcal{R}|$ increased from one to nine.  
The min-max AoS of RHCOP, GreedyOL, GMMPre and Random increases by 333.33\%, 348.28\%, 271.43\% and 38.30\%, where these values are 6.50 when $|\mathcal{R}|=9$. 
This is because higher number of DAG requests require more energy, communication and computation resources to run all DAG requests.  As these resources are fixed, the min-max AoS of all solutions increases.  
The next experiment studies the following window sizes: $K\in \{1, 4, 7, 10, 13\}$, where $|\mathcal{V_{S}}| =|\mathcal{V_{G}}|=|\mathcal{N}_{i}|=|\mathcal{R}|=3$.  The value of $L$ and $|\mathcal{T}|$ is set to 30 cm and 12 slots, respectively. 
From Figure \ref{Window}, the time window size has no impact on the performance of MILP, GreedyOL, GMMPre and Random.  For example, the min-max AoS of MILP ranges from 2.47 to 2.50 when $K$ increases from one to 13.  
This is because MILP, GreedyOL and Random do not use a time horizon window.  Further, the window size of GMMPre is fixed to $K^{'}=8$.  
Consequently, the value of $K$ has no impact on the performance of GMMPre. 
Another observation is that the min-max AoS of RHCOP decreases from 4.28 to 2.61 when $K$ increases from one to seven. 
The reason is because RHCOP has more wireless channel gain and energy arrival information of future slots when $K$ increases.
This thus allows RHCOP to make decision using more prediction information.
However, due to estimation errors, with more future slots, a solution computed by RHCOP for the current time slot is less likely to be optimal.   Thus, the min-max AoS of RHCOP increases from 2.61 to 3.17 when $K$ further increases to 13.  
Next, we study the impact of solar panel size $L$, where $L$ increases from 20 cm to 100 cm with an interval of 20 cm.   We set $|\mathcal{V_{S}}| =|\mathcal{V_{G}}|=|\mathcal{N}_{i}|=|\mathcal{R}|=3$, $K$=8 and $|\mathcal{T}|=12$. 
Referring to Figure \ref{Solar_P}, the min-max AoS of MILP decreased by 61.59\% from 4.27 when $L$ increased from 20 cm to 100 cm.   The min-max AoS of RHCOP and GreedyOL is 4.41 and 4.51 when $L=20$ cm, respectively.  These values decrease to 1.75 and 1.83 when $L$ further increases to 100 cm.   
This is because gateways and servers have more energy to support VNF-Cs and VNF-Ps, leading to lower AoS values.  
We now study DAGs with different ratio of VNF-Cs versus VNF-Ps.  
We fix the number of VNFs in a DAG to six, and study the following ratio of VNF-C to VNF-P in each DAG: \{1/6, 2/6, 3/6, 4/6, 5/6\}.  We set $|\mathcal{V_{S}}| =|\mathcal{V_{G}}|=|\mathcal{N}_{i}|=|\mathcal{R}|=3$, where $K$ and $L$ is 8 and 30 cm, respectively.
As per Figure \ref{VNFRatio}, min-max AoS decreases when the ratio of VNF-C to VNF-P increases from 1/6 to 3/6.  For example, the min-max AoS of MILP increases from 3.71 to 2.38.  However, as the ratio of VNF-C to VNF-P further increases from 3/6 to 5/6, the min-max AoS increases.  For instance, the min-max AoS of MILP increases from 2.38 to 4.06.  
The reason is because the increasing ratio of VNF-Cs to VNF-Ps resulting in fewer VNF-Ps.  Hence, fewer VNF-Ps share energy, communication and computation resources of servers, leading to lower min-max AoS values.  
However, as the ratio of VNF-Cs to VNF-Ps further increases, more VNF-Cs share the resources of gateways, resulting in higher min-max AoS values.  
This experiment studies how resource limit at gateways and servers impact performance, where we study three scenarios: (i) gateways with unlimited resources, (ii) servers with unlimited resources, (iii) servers and gateways with unlimited resources.  
We set $|\mathcal{V_{S}}| =|\mathcal{V_{G}}|=|\mathcal{N}_{i}|=|\mathcal{R}|=3$.  The value of $L$, $K$ and $|\mathcal{T}|$ is 30 cm, 8, and 12 slots, respectively. 
The number of VNFs in a DAG increases from two to ten with an interval of two.  
Referring to Figure \ref{EHCom}, the min-max AoS of scenario (iii) is one, meaning the network has sufficient resources to activate all DAGs in each slot.  
The min-max AoS of scenario (i) and (ii) increases by 212.84\% and 128.87\% from 1.09 and 1.42, respectively, with increasing number of VNFs.  This is expected as more resources are required to run VNFs.  
In scenario (i), more VNF-Ps share the resource of servers when the number of VNFs increases.  Thus, the min-max AoS of MILP increases.  When servers have infinite resource in scenario (ii) additional VNFs mean more VNF-Cs will require the resource of gateways, which increases the min-max AoS of MILP.  

\begin{figure}
    \centering
    \includegraphics[width=0.8\linewidth]{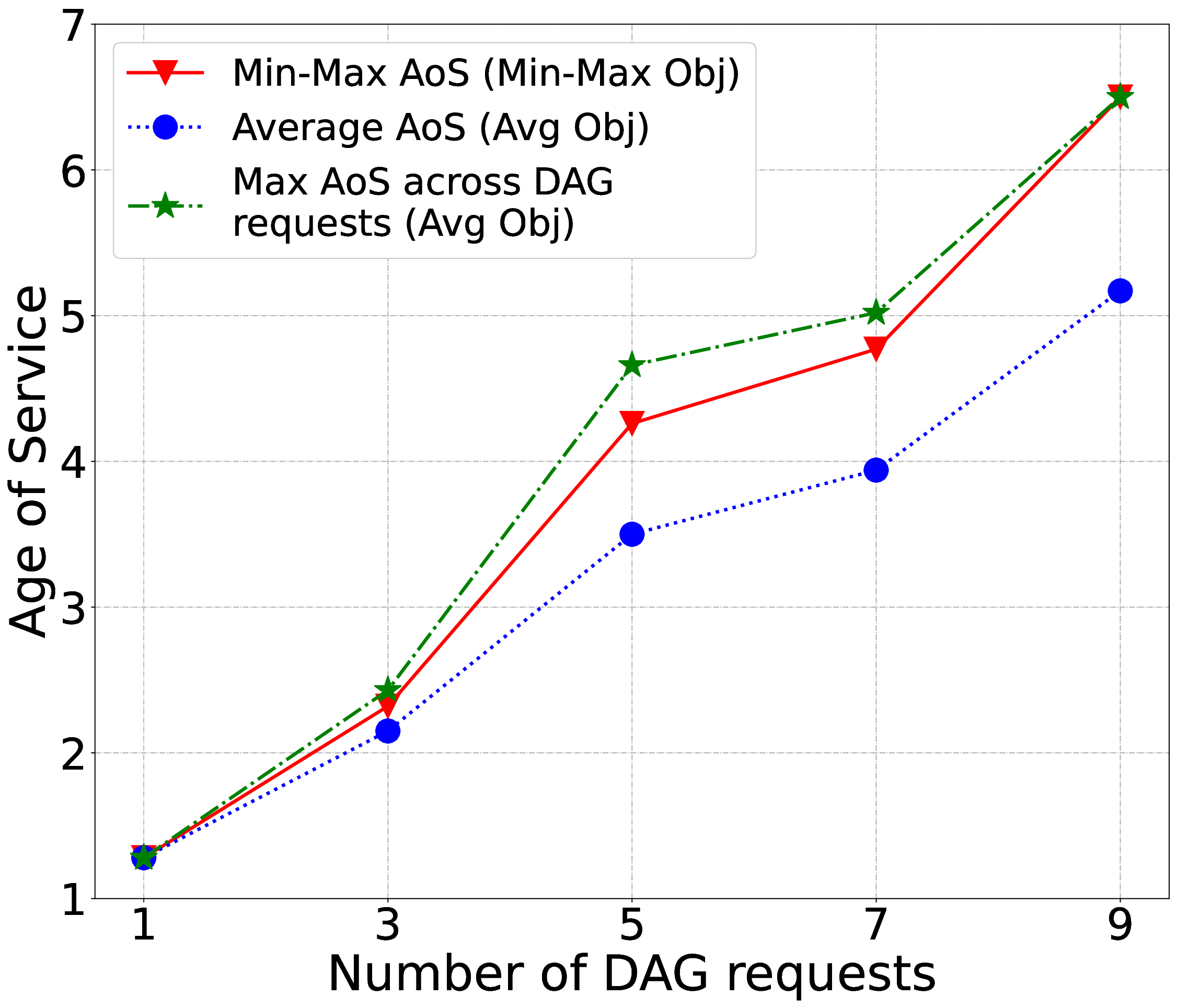}
    \caption{Min-max AoS vs. average AoS.}
    \label{fig:avg}
\end{figure}

We now discuss optimizing min-max AoS versus average AoS, and the trade-off between fairness and network efficiency.  
We set $|\mathcal{V_{S}}| =|\mathcal{V_{G}}|=|\mathcal{N}_{i}|=3$, $K$=8 and $|\mathcal{T}|=12$, and increase $|\mathcal{R}|$ from one to nine.  We compare three schemes: (i) the min-max AoS achieved by MILP \ref{milp}, (ii) the average AoS, which the the average AoS of all DAG requests, (iii) the maximum average AoS.
As illustrated in Figure \ref{fig:avg}, the average AoS is lower than min-max AoS.  For example, the average AoS is 82.16\% that of the min-max AoS for $|\mathcal{R}|=5$.  This means our approach achieves higher performance when we optimize the average AoS.  
However, the min-max AoS is lower than the maximum value of average AoS.  For example, the value of min-max AoS is 91.42\% that of the maximum value of average AoS when $|\mathcal{R}|=5$.  
It shows that relying solely on the average AoS leads to significantly higher AoS for certain nodes, resulting in poor fairness.  In contrast, our min-max AoS metric ensures fairness across all nodes.

\begin{figure}
    \centering
    \includegraphics[width=0.85\linewidth]{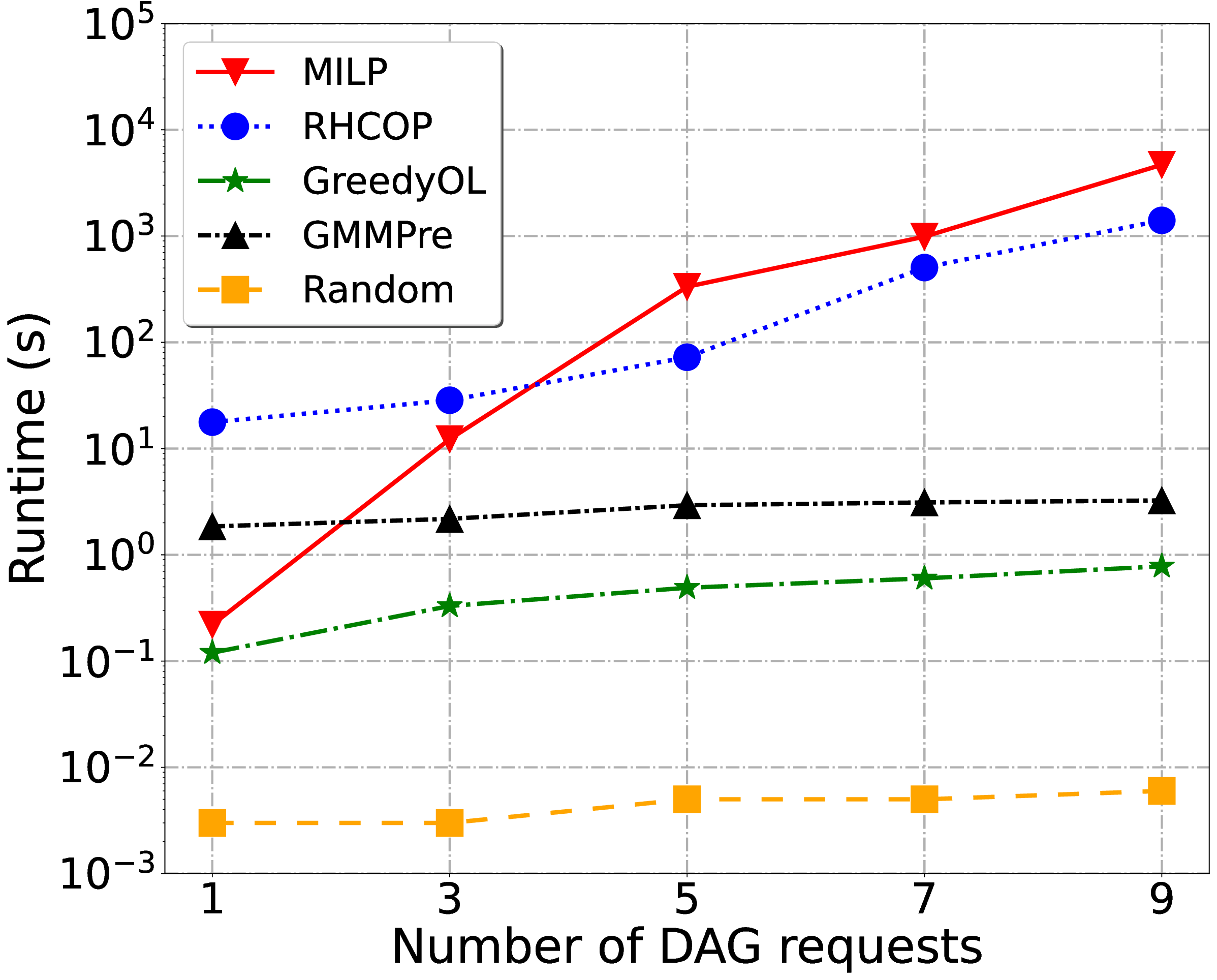}
    \caption{Runtime of MILP, RHCOP, GreedyOL, GMMPre and Random. }
    \label{fig:runtime}
\end{figure}

We now analyze the runtime of MILP, RHCOP, GreedyOL, GMMPre and Random as the number of DAG requests varies, where $|\mathcal{R}|\in\{1,3,5,7,9\}$.
We set $|\mathcal{V_{S}}| =|\mathcal{V_{G}}|=|\mathcal{N}_{i}|=3$, $K$=8 and $|\mathcal{T}|=12$.
Referring to Figure~\ref{fig:runtime}, the runtime of MILP increases as $|\mathcal{R}|$ grows, rising from 0.22 seconds at $|\mathcal{R}|=1$ to 4685.04 seconds at $|\mathcal{R}|=9$.  RHCOP increases from 17.73 seconds to 1403.92 seconds over the same range. 
This rapid growth occurs because both MILP and RHCOP rely on a solver to solve MILP, where the number of decision variables and constraints increases with $|\mathcal{R}|$.
For GreedyOL, the runtime increases from 0.12 seconds to 0.78 seconds when $|\mathcal{R}|$ increases one to nine, which empirically validates the complexity analysis in Section~\ref{ANA}.
Lastly, we consider the min-max AoS gap between MILP, RHCOP, GreedyOL, GMMPre and Random.
In particular, compared to the min-max AoS of MILP, the performance of RHCOP, GreedyOL, GMMPre and Random is 1.07, 1.13, 1.29 and 1.92 higher than that of MILP.
The reason is because MILP uses non-causal information to make decision, i.e., future wireless channel gains and energy arrivals over $|\mathcal{T}|$ slots.  By contrast, RHCOP, GreedyOL, GMMPre and Random have only causal information.
Further, RHCOP and GMMPre use GMM to predict wireless channel gains and energy arrivals,  resulting in prediction error.  
In addition, we observe that the performance of RHCOP is better than that of GreedyOL, GMMPre and Random.  This is because (i) RHCOP uses historical data, and (ii) RHCOP makes decision using estimates across multiple slots.
GreedyOL has better performance as compared to GMMPre.  The reason is because GMMPre uses the average energy arrivals at gateways or servers, meaning in some time slots, a gateway or server may have insufficient energy to support the decision of GMMPre.

%
%
\section{Conclusion}\label{CONC}
This paper has considered a novel DAG requests embedding problem in solar powered networks.  Its aim is to minimize the maximum AoS of DAGs/applications.  It presented the first MILP for the problem, and proposed a RHC-based solution called RHCOP, and a heuristic solution called GreedyOL.
The results showed that the min-max AoS of RHCOP and GreedyOL is 1.07x and 1.13x higher than MILP.  
Further, more gateways, servers and larger solar panel sizes helped reduce min-max AoS.  By contrast, increasing the number of DAG requests led to higher min-max AoS values.  
All solutions had the lowest min-max AoS when the ratio of VNF-C to VNF-P is 0.5.
A potential future work is to consider applications that require specific services or data.  
\bibliographystyle{ieeetr}
\bibliography{r}

\end{document}